\newtheorem{definition}{\rm\textbf{Definition}}
\newtheorem{theorem}{\rm\textbf{Theorem}}
\newtheorem{lemma}{\rm\textbf{Lemma}}
\newtheorem{corollary}{\rm\textbf{Corollary}}
\newtheorem{assumption}{\rm\textbf{Assumption}}
\begin{document}
%
\title{Secure Control of Connected and Automated Vehicles Using Trust-Aware Robust Event-Triggered  Control Barrier Functions}



\makeatletter
\newcommand{\linebreakand}{%
  \end{@IEEEauthorhalign}
  \hfill\mbox{}\par
  \mbox{}\hfill\begin{@IEEEauthorhalign}
}
\makeatother

\author{\IEEEauthorblockN{H M Sabbir Ahmad}
\IEEEauthorblockA{Boston University\\
sabbir92@bu.edu}
\and
\IEEEauthorblockN{Ehsan Sabouni}
\IEEEauthorblockA{Boston University\\
esabouni@bu.edu}
\and
\IEEEauthorblockN{Akua Dickson}
\IEEEauthorblockA{Boston University\\
akuad@bu.edu}
\linebreakand 
\IEEEauthorblockN{Wei Xiao}
\IEEEauthorblockA{Massachusetts Institute of Technology\\
weixy@mit.edu}
\and
\IEEEauthorblockN{Christos G. Cassandras}
\IEEEauthorblockA{Boston University\\
cgc@bu.edu}
\and
\IEEEauthorblockN{Wenchao Li}
\IEEEauthorblockA{Boston University\\
wenchao@bu.edu}
\thanks{This work was supported in part by NSF under grants CPS-1932162, ECCS-1931600, DMS-1664644, CNS-2149511, and by ARPA-E under grant DE-AR0001282.}}

\IEEEoverridecommandlockouts
\makeatletter\def\@IEEEpubidpullup{6.5\baselineskip}\makeatother
\IEEEpubid{\parbox{\columnwidth}{
    Symposium on Vehicles Security and Privacy (VehicleSec) 2024 \\
    26 February 2024, San Diego, CA, USA \\
    ISBN 979-8-9894372-7-6 \\
    https://dx.doi.org/10.14722/vehiclesec.2024.23037 \\
    www.ndss-symposium.org
}
\hspace{\columnsep}\makebox[\columnwidth]{}}

\maketitle

\begin{abstract}

We address the security of a network of Connected and Automated Vehicles (CAVs) cooperating to safely navigate through a conflict area 
(e.g., traffic intersections, merging roadways, roundabouts). Previous studies have shown that such a network can be targeted by adversarial attacks causing traffic jams or safety violations ending in collisions. 
We focus on attacks targeting the V2X communication network used to share vehicle data and consider as well uncertainties due to noise in sensor measurements and communication channels. To combat these, motivated by recent work on the safe control of CAVs, we propose a trust-aware robust event-triggered decentralized control and coordination framework that can provably guarantee safety. 
We maintain a trust metric for each vehicle in the network computed based on their behavior and used to balance the tradeoff between conservativeness (when deeming every vehicle as untrustworthy) and guaranteed safety and security.
It is important to highlight that our framework is invariant to the specific choice of the trust framework. 
Based on this framework, we propose an attack detection and mitigation scheme which has twofold benefits: (i) the trust framework is immune to false positives, and (ii) it provably guarantees safety against false positive cases.
We use extensive simulations (in SUMO and CARLA) to validate the theoretical guarantees and demonstrate the efficacy of our proposed scheme to detect and mitigate adversarial attacks. The code for the simulated scenarios can be found in this \href{https://github.com/SabbirAhmad26/Trust_based_CBF}{\textit{\underline{link}}}.
\end{abstract}

\section{Introduction}
\label{Introduction}
The emergence of Connected and Automated Vehicles (CAVs) and advancements in traffic infrastructure \cite{li2013survey} promise to offer solutions to transportation issues like accidents, congestion, energy consumption, and pollution \cite{deWaard09,kavalchuk2020performance}. To achieve these benefits, secure and efficient traffic management is crucial, particularly at bottleneck locations such as intersections, roundabouts, and merging roadways \cite{VANDENBERG201643}. 

We focus on \emph{decentralized} algorithms as they provide manifold benefits, including added security since an attacker can only target a limited number of agents; in contrast, in a centralized scheme an attack on the central entity can potentially compromise every agent/CAV. Security of Autonomous Vehicles (AVs) has been extensively studied in existing literature \cite{Shukla_01,Sun_01,Pham_01} whereby the attacks can be broadly categorized into in-vehicle network attacks and V2V or V2X communication network attacks. 
There has been significant research done \cite{xu2019grouping,Xu_03,Xiao_03} from a control point of view with the aim of designing efficient real-time controllers for CAVs. However, ensuring security in the implementation of these controllers has received little attention, with the literature mostly limited to the security of Cooperative Adaptive Cruise Control (CACC) \cite{Lu_01,Biroon_02,Boddupalli_01,Alipour_01,Farivar_01}. 
These studies do not extend to the more critical parts of a traffic network such as intersections or roundabouts, where the repercussions of an attack are more severe, yet the literature addressing security in these cases is limited. 

The authors in \cite{Jarouf_01} propose a technique based on public key cryptography, while \cite{Zhao_01} assesses cybersecurity risks on cooperative ramp merging by targeting V2I communication with road-side units (RSU). 
More comprehensive studies of the security of decentralized control and coordination algorithms for CAVs can be found in \cite{ahmad_01, ahmad_02}. 
In \cite{ahmad_02}, an attack resilient control and coordination algorithm is proposed using Control Barrier Functions (CBFs) without any mitigation technique. Moreover, the framework in \cite{ahmad_02} only uses V2X communication without local perception, which we deem highly useful for added security. It is also not robust to uncertainties in state estimates/measurements, which poses a security limitation as many stealthy attacks are designed to evade detection by a Bad Data Detector (BDD).

The notion of trust/reputation has been applied to multi-agent systems including Intelligent Transportation Systems (ITS) in \cite{Chen_01, Cheng_02, Hu_01, Hu_02}.
In \cite{Parwana_01} a novel trust-based CBF framework is proposed for multi-robot systems (MRSs) to provide safe control against adversarial agents. However, this cannot be directly applied to a traffic network as it is limited to a specific characterization of agents that does not apply to a road network. The authors in \cite{Garlichs_01} propose a trust framework to address the security of CACC. Along a similar vein, \cite{Shoukry_01} employs a trust framework to address Sybil attacks within traffic intersections using a macroscopic network model. However, it is constrained by the accuracy of the traffic density estimation model in detecting fake vehicles (CAVs) and also offers no guarantees on preventing false positives (i.e., detecting all fake vehicles accurately and not detecting any real vehicle as fake).  

In our paper we tackle the aforementioned shortcomings beginning with making the control and coordination robust against disturbances and uncertainties in the states measurements and estimations. Besides that, we also incorporate mitigation in order to subside the effect of the attacks on the network performance. Thus our main contributions are summarized below:

\begin{enumerate}
    \item We propose a \textit{novel robust trust-aware event-triggered control and coordination framework} that guarantees safe coordination for CAVs in conflict areas in the presence of adversarial attacks. 
    \item Our proposed formulation is robust against stealthy attacks that can pass through BDDs undetected. The benefit of event-triggered control lies in reducing the communication load, thus improving robustness against attacks.
    \item We propose an \textit{attack detection and mitigation scheme based on the trust score of CAVs} that can alleviate the effect of the attack, particularly the case of traffic holdup by restoring normal coordination. Our proposed scheme guarantees safety against false positive (FP) cases, which may arise due to a poor choice (or, design) of the trust framework.
\end{enumerate}

While our framework views security as a specification in a control and coordination problem, it is important to note that various network security measures like cryptographic techniques can complement this framework. The paper is organized in six sections. The next section provides some background, followed by the threat model in Section~\ref{threat_model}. In Section~\ref{control_coordination}, we present the robust event-triggered control and coordination framework, which is followed by the attack mitigation in Section~\ref{mitigation}. We present simulation results in Section~\ref{results}. Finally, the conclusion is included in Section~\ref{conclusion}.

\section{Background} 
\label{problem_formulation}
We present a resilient control and coordination approach that includes an attack detection and mitigation scheme for secure coordination of CAVs in conflict areas using the signal-free intersection presented in \cite{xu2021comparison} as an illustrative example. Figure \ref{fig:intersection} shows a typical intersection with multiple lanes. Here, the Control Zone (CZ) is the area within the circle. 
containing eight entry lanes labeled from $O_1$ to $O_8$ and exit lanes labeled from $l_1$ to $l_8$ 
each of length $L$ which is assumed to be the same here. Red dots show all the merging points (MPs)
where potential collisions may occur. All the CAVs have the following possible movements: going straight, turning left from the leftmost lane, or turning right from the rightmost lane. 

\begin{figure*}
\centering
\includegraphics[scale = 0.6]{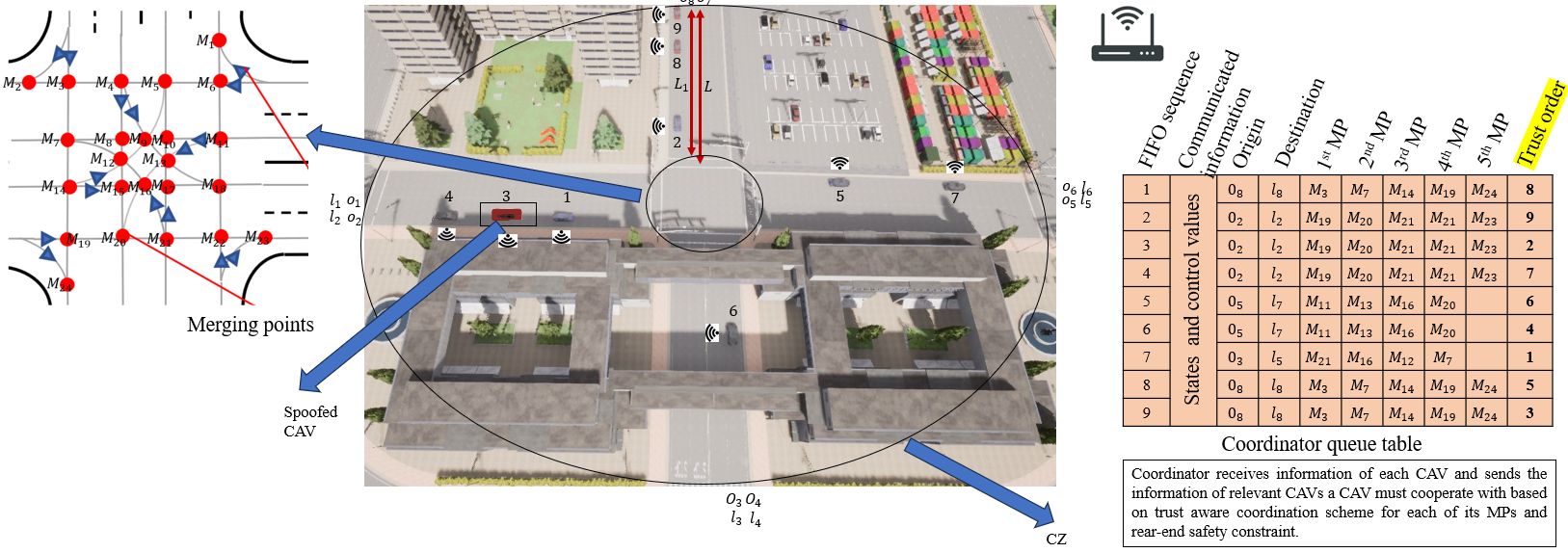}
\caption{The multi-lane intersection problem. Collisions may happen at the MPs (red dots shown in above figure). 
}
\label{fig:intersection}%
\end{figure*}

The vehicle dynamics for each CAV in the CZ take the following form
\begin{equation}
\left[
\begin{array}
[c]{c}%
\dot{x}_{i}(t)\\
\dot{v}_{i}(t)
\end{array}
\right] =\left[
\begin{array}
[c]{c}%
v_{i}(t)\\
u_{i}(t)
\end{array}
\right], \label{VehicleDynamics}%
\end{equation}
where $x_{i}(t)$ is the distance along the lane from the origin at which CAV $i$ arrives, $v_{i}(t)$ and $u_{i}(t)$ denote the velocity and control input (acceleration/deceleration) of CAV $i$, respectively.

A road-side unit (RSU) acts as a coordinator which receives and stores the state and control information $[x_i(t), v_i(t), u_i(t)]^T$ from CAVs through vehicle-to-infrastructure (V2X) communication. Additionally, it also stores and updates the a trust metric for each CAV in the CZ. It is assumed that the coordinator knows the entry and exit lanes for each CAV upon their arrival and uses it to determine
the list of MPs in its planned trajectory. It facilitates safe coordination by providing each CAV with relevant information about other CAVs in the network, particularly those that are at risk of collision.

\subsection{Constraints/rules in the Control Zone}
\label{rules}
Let $t_{i}^{0}$ and $t_{i}^{f}$ denote the time that CAV $i$ arrives at the origin and leaves the CZ at its exit point, respectively. In the following section we summarize the rules that the CAVs in the CZ have to satisfy so as to maintain a safe flow in the intersection. 

\noindent{\bf Constraint 1} (Rear-End Safety Constraint): Let $i_{p}$ denote the index of the CAV which physically immediately precedes CAV $i$ in the CZ (if one is present). It is required that CAV $i$ conforms to the following constraint:
\begin{equation}
 x_{i_p}(t) - x_i(t) - \varphi v_{i}(t) - \Delta\geq 0,\text{ \ }\forall t\in[ t_{i}^{0},t_{i}^{f}] \label{Safety}%
\end{equation}
where $\varphi$ denotes the reaction time and $\Delta \in \mathbb{R}_{> 0}$ is a given minimum safe distance which depends on the length of these two CAVs.

\noindent{\bf Constraint 2} (Safe Merging Constraint): Every CAV $i$ should leave enough room for the CAV preceding it upon arriving at a MP, to avoid a lateral collision i.e.,
\begin{equation}
\label{SafeMerging}
x_{i_m}(t_i^m) - x_{i}(t_i^m) - \varphi v_{i}(t_{i}^{m}) - \Delta\geq 0,
\end{equation}
where $i_m$ is the index of the CAV that may collide with CAV $i$ at the merging points $m_i=\lbrace 1,...,n_i \rbrace$ where $n_i$ is the total number of MPs that CAV $i$ passes in the CZ. 

\noindent{\bf Constraint 3} (Vehicle limitations): Finally, there are constraints on the speed and acceleration for each $i\in S(t)$:
\begin{equation}
\label{VehicleConstraints1}%
\begin{aligned} v_{min} \leq v_i(t)\leq v_{max}, \forall t\in[t_i^0,t_i^f]\end{aligned}
\end{equation}
\begin{equation}
\label{VehicleConstraints2}%
\begin{aligned} u_{{min}}\leq u_i(t)\leq u_{{max}}, \forall t\in[t_i^0,t_i^f] \end{aligned}
\end{equation}
where $v_{min} \geq 0$ and $v_{max} > 0$ denote the minimum and maximum allowed speed in the CZ respectively, and $u_{min} <0$ $u_{max}>0$ denote the minimum and maximum de(ac)celeration allowed in the CZ respectively. 
    The coordinator finds CAV $i_p$ and CAV $i_m$ for each CAV $i \in S(t)$ from their trajectory and communicates it to CAV $i$. The determination depends on the policy adopted for sequencing CAVs whose relative performance has been studied in \cite{Xu_02}. A common sequencing scheme is the First In First Out (FIFO) policy whereby CAVs exit the CZ in the order they arrive. 
\subsection{Decentralized control formulation:}
Under this formulation, each CAV $i$ determines its control policy in a \textit{decentralized manner} based on some objective that includes minimizing travel time and energy consumption, maximizing comfort, etc., governed by the dynamics (\ref{VehicleDynamics}). Expressing energy through $\frac{1}{2}u_i^2(t)$ we use $\alpha\in[0,1]$ as a relative weight between the time and energy objectives, which can be properly normalized by setting $\beta:=\frac{(1-\alpha)\max\{u_{max}^{2},u_{min}^{2}\}}{2\alpha}$ to penalize
travel time relative to the energy cost of CAV $i$. Then, we can formulate an Optimal Control Problem (OCP) as follows:

\begin{equation}\label{eqn:energyobja}
J_{i}(u_{i}(t),t_i^f):=
{\beta(t_{i}^{f}-t_{i}^{0})}
+
{\int_{t_{i}^{0}}^{t_{i}^{f}}\frac{1}{2}u_{i}^{2}(t)dt}
\end{equation} 
subject to Constraints \eqref{VehicleDynamics}-\eqref{SafeMerging}.

\subsection{Trust framework}
\label{trust_framework}
Let $\mathcal{B}$ be a set of indices associated with the behavioral specifications that are used to evaluate the trust of a vehicle. The behavior specifications used in our experiments are listed in \cite{ahmad_02}. For example, conformity to the underlying physical model is a specification that each CAV has to satisfy all the time. For each CAV $i \in S(t), \ \forall t \in [t_i^0,t_i^f]$ the coordinator assigns positive evidence $r_{i,j}(t)$ and negative evidence $p_{i,j}(t)$ for conformance and violation respectively of every specification $j \in \mathcal{B}$ respectively (where $0 \leq r_{i,j}(t) \leq r_{max}, 0 \leq p_{i,j}(t) \leq p_{max}$), which it uses to update the trust $\tau_i(t)$. We define $R_i(t)$ and $P_i(t)$ as cumulative positive and negative evidence for CAV $i$ at time $t$ discounted by trust of other CAVs (if the check involves another CAV, as in \eqref{Safety} and \eqref{SafeMerging}, as they can be untrustworthy). We also define a time discount factor $\gamma \in (0,1)$ as shown below. In addition, we use a non-informative prior weight $h_i$ as in \cite{Cheng_01,Cheng_02}. Let the set of checks for every CAV involving another CAV(s) be denoted by $\mathcal{B}_{a} \subset \mathcal{B}$. The set of other CAVs involved in check $j \in \mathcal{B}_a$ when applied to CAV $i$, is denoted as $S_{i,j}(t) \subseteq S(t)/\{i\}$. 
Then, the trust metric 
is updated as follows: 

\begin{equation} \label{trust}
    \tau_i(t) = \frac{R_i(t)}{R_i(t) + P_i(t) + h_i} \ \ \forall i \in S(t)
\end{equation}

\begin{align} \label{evidence}
    R_i(t) = & \gamma R_i(t-1) + \sum_{j \in \mathcal{B} \backslash \mathcal{B}_{a}} r_{i,j}(t) + \sum_{j \in \mathcal{B}_{a}}\prod_{k\in S_{i,j}} \tau_k(t)r_{i,j}(t)  \nonumber \\
    P_i(t) = &\gamma P_i(t-1) + \underbrace{\sum_{j \in \mathcal{B} \backslash \mathcal{B}_{a}} p_{i,j}(t) + \sum_{j \in \mathcal{B}_{a}} \prod_{k\in S_{i,j}} \tau_k(t)p_{i,j}(t)}_{p_i(t)} \nonumber \\
    &\forall i \in S(t), \forall t \in [t_i^0,t_i^f]
\end{align}

Finally, we define a lower trust threshold $\delta \in (0, 1/2)$, and a higher trust threshold $1 - \delta$ for subsequent sections. It is important to emphasize that, in practice, the magnitude of negative evidence is different and significantly higher compared to the magnitude of positive evidence. Note that, every vehicle is deemed untrustworthy i.e. $\tau_i(t_i^0) = 0 \ \forall i$ upon arrival in the CZ. The coordinator updates the trust for every CAV based on the outcome of the behavior specification checks. The zero trust model is used to prioritize safety in our proposed framework.    

\section{Threat model}
\label{threat_model}
The adversarial effects of malicious attacks, as highlighted in \cite{ahmad_01}, consist of creating traffic jams across multiple roads due to the cooperative aspect of the control scheme, and, in the worst case, accidents. This warrants making the control robust against these attacks. We consider the attacker models presented in \cite{ahmad_01} in what follows.

\begin{definition}
    (Safe coordination) Safe coordination is defined as the ability to guarantee the satisfaction of \eqref{Safety} and \eqref{SafeMerging} for every CAV $i \in S(t) \ \forall t$ while also conforming to \eqref{VehicleConstraints1} and \eqref{VehicleConstraints2}.
\end{definition}


\begin{definition}
    (Adversarial agent) An agent is called \emph{adversarial} if it has one of the following objectives: (i) prevent safe coordination, (ii) reduce traffic throughput.
\end{definition}


\begin{assumption}
\label{accident_assumption}
    Adversarial agents do not collide with other CAVs, nor do they attempt to cause collisions between CAVs and themselves to avoid inflicting loss on themselves.
\end{assumption}

\noindent \textbf{Sybil attack} A single adversarial agent (could be a CAV or attacker nearby the $\mathrm{CZ}$) may spoof one or multiple unique identities and register them in the coordinator queue table as detailed in \cite{ahmad_02}. 
Let $S_x(t)$ and $S_s(t)$ be the set of the indices of normal and fake CAVs in the FIFO queue of the coordinator unit. Therefore at any time $t$, there are $N(t)=\left|S_x(t)\right|+\left|S_{s}(t)\right|$ CAVs which communicate their state and control information to the coordinator. 
A Sybil attack is one where the $S_{s} (t) \subset S(t)$ is a nonempty set that is located in the coordinator queue table, but unknown to the coordinator.

\begin{assumption}
\label{max_count_assumption}
    There is an upper bound on the maximum number of fake CAVs that an adversary can spoof during a Sybil attack due to resource and energy limitations.
\end{assumption}

\begin{assumption}
    (Bad data detection) The CAVs are equipped with BDDs whereby 
    $\|\boldsymbol{x}_i(t) - 
    \boldsymbol{\hat{x}}_i(t)\|_{\infty} \le \epsilon$, $\forall t, \forall i \in S(t)$ where 
    $\boldsymbol{\hat{x}}_i(t)$ is the measured/estimated state of CAV $i$ at time $t$ and $\|\boldsymbol{x}\|_{\infty}$ is the infinity norm of the state vector.   
\end{assumption}

\noindent\textbf{Stealthy attack} 
An attack is stealthy if $\|\boldsymbol{x}(t) - \boldsymbol{\hat{x}}(t)\|_{\infty} \le \epsilon_1$. Such attacks can be injected through targeting V2I and in-vehicular networks as well as onboard sensing systems.

Specifically, we consider bias injection attacks as defined below:

\textbf{Bias Injection (BI) attack} 
An adversarial agent may attempt to violate safe coordination amongst CAVs, or affect the traffic by targeting one or more CAVs using Person-In-The-Middle attack by adding bias to the data sent by the CAVs to the RSU, or the data sent by the RSU to the CAVs containing state information of the relevant CAVs, or both of them. Let $\boldsymbol{y_i}\left(t\right)$, $i \in S(t)$ be the data (of CAV $i$, or data for CAV $i$ containing the information of the relevant CAVs) injected by the adversary during the attack; and $\boldsymbol{z_i}\left(t\right)$ be the actual data (of CAV $i$ sent to the RSU or data for CAV $i$ containing the information of the relevant CAVs sent by the RSU). Then, during the BI attack, $\boldsymbol{y_i}(t) = \boldsymbol{z_i}(t) + \boldsymbol{g_i(t)}$ where $\|\boldsymbol{g_i}(t)\|_\infty \le \epsilon_1$ is the mapping used by the adversary to generate false data being stealthy.

\begin{assumption} \label{coordinator_assumption}
    We assume that the coordinator is trustworthy i.e., it is not targeted by attacks.
\end{assumption}


\section{Safe and Resilient Control Formulation using Trust Aware CBFs}
\label{control_coordination}

\subsection{Trust-aware coordination}

The RSU assigns each CAV a unique index based on a passing sequence policy and this information is tabulated and stored according to the assigned indices as shown in Fig. \ref{fig:intersection}. For example, under a FIFO passing sequence the coordinator assigns $N(t)+1$ to a new CAV upon arriving in the CZ. Similarly, each time a CAV $i$ leaves the CZ, it is dropped from the table and all CAV indices larger than $i$ decrease by one.

The coordinator computes and updates the trust metric for each CAV in the CZ as shown in figure \ref{fig:intersection}. The trust metric is incorporated to the selected passing sequence to identify the CAVs any given CAV has to cooperate within the CZ. The cooperation with a CAV involves either constraint \eqref{Safety}, or \eqref{SafeMerging}. According to this method, for every CAV $i \in S(t)$ and for every MP $j \in m_i$, the coordinator identifies the indices of all CAVs that precede CAV $i$ at $j$ based on the selected 
passing sequence until the first CAV whose trust value is greater than or equal to $1- \delta$. This leads to a new set $S_{i,j}(t) \subset S(t)$ containing all the CAV indices identified during the search process. The coordinator follows the same search process for every MP in $m_i$ corresponding to \eqref{SafeMerging}. Therefore, for each CAV $i$, the coordinator identifies $S_{i}^p(t) \subset S(t)$, and $S_{i}^M(t) = \cup_{j \in m} S_{i,j}(t)$ (where $S_{i}^p(t)$ is the set for \eqref{Safety} and $S_{i}^M(t)$ correspond to the set of indices for every MP) and the information is communicated to the CAV. For the example in Fig. \ref{fig:intersection}, note that for CAV 4 we have $i_p = 3$, however since $\tau_3 < 1-\delta$, the search process will continue and return $S_{4,p} = \{1,3\}$. 

\noindent{\textbf{Local sensing}} We also assume that each CAV has a vision-based perception capability defined by a radius and angle pair denoted as $(r,\theta)$, (where $r \in \mathbb{R}^+$,$\theta \in [0,2\pi]$). The incorporation of local sensing into CAV $i \in S(t)$ adds additional constraints of the form \eqref{Safety} to the control problem, besides the constraints corresponding to $S_{i}^p(t)$ and $S_{i}^M(t)$ returned by \textit{trust-based search}. Every CAV $i \in S(t)$ is able to estimate the states of every observed CAV $j$ within its sensing range. CAV $i$ is able to estimate the state of the preceding CAV (if there is one and it is within sensing range) 
and in the vicinity of MPs in its own trajectory; 
in particular, the CAV that will precede $i$ immediately at its next MP should be visible to CAV $i$.

We consider state estimates and communication information  
from the coordinator to be noisy as defined below:
\begin{equation}
\label{noisy_state_estimate}
    \boldsymbol{\hat{x}}_{i} (t) = \boldsymbol{x}_{i} (t) + \boldsymbol{w}_{i}(t)
\end{equation}
where $\boldsymbol{w}_{i}(t) = [w_{i}^{(x)}(t), w_{i}^{(v)}(t)]^T$ is random measurement noise with bounded support $\|\boldsymbol{w}_i\|_\infty \le \epsilon_1, \ \forall i \in S(t)$. We can set $\epsilon = \epsilon_1$ (same as the bound for stealthy attacks) to make the controller robust to both noise and stealthy attacks.


\noindent \textbf{The OCBF Controller}. This approach uses the OCP formulation in (\ref{eqn:energyobja})
with each state constraint $b_q(\boldsymbol{x}(t)) \geq 0$ mapped onto a new constraint which has the property that it 
implies $b_q(\boldsymbol{x}(t)) \geq 0$ and it 
is \emph{linear} in the control input. The function $b_q(\boldsymbol{x}(t))$ is called a Control Barrier Function (CBF) \cite{Xiao2019}.
We use such CBFs so as to ensure the constraints (\ref{Safety}), (\ref{SafeMerging}), \eqref{VehicleConstraints1} and \eqref{VehicleConstraints2} are satisfied
subject to the vehicle dynamics in (\ref{VehicleDynamics}) by defining $f(\boldsymbol{x}_i(t))=[v_i(t),0]^T$ and $g(\boldsymbol{x}_i(t))=[0,1]^T$. Each of these constraints can be easily written in the form of $b_q(\boldsymbol{x}_{i,j}(t)) \geq 0$, $q \in \lbrace 1,2,3,4 \rbrace$ where $n$ stands for the number of constraints only dependent on state variables $\boldsymbol{x}_{i,j}(t)=[\boldsymbol{x}_i(t),\boldsymbol{x}_j(t)]^T$. 
The general form of the \emph{transformed} CBF-based constraints is:
\begin{equation} \label{cbf_condition}
L_fb_q(\boldsymbol{x}_{i,j}(t))+L_gb_q(\boldsymbol{x}_{i,j}(t))u_i(t)+\kappa_q( b_q(\boldsymbol{x}_{i,j}(t))) \geq 0
\end{equation}
where $L_f,L_g$ are the Lie derivatives of a function along the system dynamics defined by $f,g$ above and
$\kappa_q$ is a class $\mathcal{K}$ function. By combining the OCP formulation in (\ref{eqn:energyobja}) with the CBF-based constraints of the form (\ref{cbf_condition}) instead of the original ones, we obtain the Optimal control with CBFs (termed OCBF) approach detailed in \cite{Xiao_03}.

Finally, the road speed limit can be included as a reference $v_{i}^{ref}(t)$ treated by the controller as a soft constraint using a Control Lyapunov Function (CLF) \cite{Xu_02}
by setting $V(\boldsymbol{x}_i(t))=(v_i(t)-v_{i}^{ref}(t))^2$, rendering the following constraint:
\begin{equation}\label{CLF_constraint}
L_fV(\boldsymbol{x}_i(t))+L_gV(\boldsymbol{x}_i(t))\boldsymbol{u}_i(t)+ c_i V(\boldsymbol{x}_i(t))\leq e_i(t),
\end{equation}
where $e_i(t)$ makes this a soft constraint. The significance of CBFs in this approach is twofold: first, their forward invariance property \cite{Xiao2019} guarantees that all constraints they enforce are satisfied at all times if they are initially satisfied; second, CBFs impose \emph{linear} constraints on the control which is what enables the efficient solution of the tracking problem through a sequence of Quadratic Programs (QPs) thus computationally efficient and suitable for real-time control.


\subsection{Trust-Aware CBFs}
The choice of the class $\mathcal{K}$ function in (\ref{cbf_condition})
determines the rate at which an agent/CAV reaches the boundary of the safety set. Thus, the choice of this function provides a tradeoff between conservativeness and safety. We can choose a conservative candidate function to prioritize safety by considering all agents to be untrustworthy. However, in view of the available trust metric, we incorporate it in the function with the aim of balancing this tradeoff. The underlying idea is that the degree of conservativeness of a CBF constraint corresponding to a CAV $i$ with respect to CAV $j$ can be adjusted by incorporating the \emph{trust} of CAV $j$, $\tau_j$, in it as shown below:
\begin{equation} \label{trust_cbf_condition}
L_fb_{q}(\boldsymbol{x}_{i,j}(t))+L_gb_{q}(\boldsymbol{x}_{i,j}(t))u_i(t)+\kappa_{q,\tau_j}(b_{q}(\boldsymbol{x}_{i,j}(t))) \geq 0.
\end{equation}
An example for the choice of a class $\mathcal{K}$ function is 
$\kappa_q( b_q(\boldsymbol{x}_{i,j}(t))) = c_{i,j}\tau_j(t) b_{q}(\boldsymbol{x}_{i,j}(t))$, 
where $c_{i,j} \in \mathbb{R}^+$ is a scaling factor.
\subsection{Robust Trust-Aware CBFs}

In the presence of noisy measurements (estimates) as in \eqref{noisy_state_estimate} the corresponding CBF constraint in \eqref{cbf_condition} can be rewritten as follows due to \eqref{noisy_state_estimate}:
\begin{align} \label{noisy_cbf_condition}
&L_fb_q(\boldsymbol{\hat{x}}_{i,j}(t) - \boldsymbol{w}_{i,j}(t))+L_gb_q(\boldsymbol{\hat{x}}_{i,j}(t) - \boldsymbol{w}_{i,j}(t))u_i(t)+ \nonumber \\ 
&\kappa_{q,\tau_j}(b_{q}(\boldsymbol{\hat{x}}_{i,j}(t) - \boldsymbol{w}_{i,j}(t))) \geq 0.
\end{align}
where $\boldsymbol{w}_{i,j}(t) = [\boldsymbol{w}_i(t), \boldsymbol{w}_j(t)]^T$. For example, the CBF constraint corresponding to \eqref{Safety} is as follows:
\begin{equation}
\small
    v_{i_p}(t) - v_{i}(t) - \varphi u_i(t) - \kappa_{q,\tau_{i_p}}(x_{i_p}(t) - x_{i}(t) - \varphi v_i(t) - \Delta) \ge 0     
\end{equation}
In the presence of noise $\boldsymbol{w}_i(t)$, according to \eqref{noisy_state_estimate} this becomes:
\begin{align*}
    &\hat{v}_{i_p}(t) + w_{i_p}^{(v)}(t) - \hat{v}_{i}(t) - w_{i}^{(v)}(t) - \varphi u_i(t) - \kappa_{q,\tau_{i_p}}(\hat{x}_{i_p}(t) \nonumber\\
    &+ w_{i_p}^{(x)}(t) - \hat{x}_{i}(t) - w_{i}^{(x)}(t) -\varphi \hat{v}_i(t) -  \varphi w_{i}^{(v)}(t)- \Delta) \ge 0
\end{align*}
Obviously, the random noise $\boldsymbol{w}_{i,j}(t)$ is unknown, hence, we use the bound $\epsilon_1$ on the noise to derive the following lemma for the robust trust-aware CBF.


\begin{lemma} \label{lem:robust_CBF}
   Given a constraint $b_q(\boldsymbol{x}(t))$ associated with the set $\mathrm{C}:=\{\boldsymbol{x}\in \mathbb{R}^n:b_q(\boldsymbol{x})\geq 0\}$ and $\|\boldsymbol{w}_{i,j}\|_{\infty} \leq \epsilon_1$, any Lipschitz continuous controller $u(t)$ that satisfies 
\begin{align} \label{robust_trust_cbf_condition}
&\min_{\{\boldsymbol{w}_{i,j}(t): \|\boldsymbol{w}_{i,j}(t)\|_{\infty} \leq \epsilon_1\}} [L_fb_q(\boldsymbol{\hat{x}_{i,j}}(t) - \boldsymbol{w}_{i,j}(t))]+ L_gb_q(\boldsymbol{\hat{x}_{i,j}}(t) \\ \nonumber    & - \boldsymbol{w}_{i,j}(t))u_i(t) + \kappa_{q,\tau_j}(b_{q}(\boldsymbol{\hat{x}_{i,j}}(t) - \boldsymbol{w}_{i,j}(t)))] \geq 0 
\end{align}
renders the set $C$ forward invariant $\forall t \geq t_{0}$ for the system \eqref{VehicleDynamics}.
\end{lemma}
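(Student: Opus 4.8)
The plan is to reduce the robust inequality \eqref{robust_trust_cbf_condition} to the ordinary (trust-aware) CBF inequality \eqref{trust_cbf_condition} evaluated at the \emph{true} state, and then invoke the classical forward-invariance property of CBFs \cite{Xiao2019}. First I would note that, by the bounded-support property in \eqref{noisy_state_estimate}, the realized noise $\boldsymbol{w}_{i,j}(t)$ satisfies $\|\boldsymbol{w}_{i,j}(t)\|_\infty \le \epsilon_1$ for every $t \ge t_0$, so $\boldsymbol{w}_{i,j}(t)$ is a feasible point of the inner minimization in \eqref{robust_trust_cbf_condition}. Consequently the bracketed expression evaluated at this particular $\boldsymbol{w}_{i,j}(t)$ is no smaller than the minimum over the noise ball, which is $\ge 0$ by hypothesis. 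Since $\boldsymbol{\hat{x}}_{i,j}(t) - \boldsymbol{w}_{i,j}(t) = \boldsymbol{x}_{i,j}(t)$ by \eqref{noisy_state_estimate}, this gives
\[ L_fb_q(\boldsymbol{x}_{i,j}(t)) + L_gb_q(\boldsymbol{x}_{i,j}(t))u_i(t) + \kappa_{q,\tau_j}(b_q(\boldsymbol{x}_{i,j}(t))) \ge 0 \]
along the closed-loop trajectory for all $t \ge t_0$; i.e., the nominal trust-aware CBF condition holds on the actual system \eqref{VehicleDynamics} even though the controller was synthesized using only the noisy estimate.

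Next I would verify the hypotheses needed to apply the CBF forward-invariance theorem. Lipschitz continuity of $u(t)$ together with the affine (hence locally Lipschitz) vector fields $f,g$ of \eqref{VehicleDynamics} guarantees existence and uniqueness of the solution of $\dot{\boldsymbol{x}} = f(\boldsymbol{x}) + g(\boldsymbol{x})u(t)$, so the trajectory over which the inequality is asserted is well defined. I would also confirm that $\kappa_{q,\tau_j}$ (for instance the candidate $\kappa_{q,\tau_j}(b) = c_{i,j}\tau_j(t)\,b$) is an admissible, possibly time-varying, class $\mathcal{K}$ function; the only delicate case is $\tau_j(t) = 0$, where $\kappa_{q,\tau_j} \equiv 0$ and the condition degenerates to $\dot{b}_q(\boldsymbol{x}(t)) \ge 0$, which still keeps $b_q$ nondecreasing and therefore nonnegative once nonnegative, so forward invariance is preserved in the limiting case as well.

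Finally, with the nominal inequality established pointwise along trajectories, the standard comparison-lemma argument for CBFs (equivalently Nagumo's theorem, or directly the result of \cite{Xiao2019}) yields $b_q(\boldsymbol{x}(t_0)) \ge 0 \Rightarrow b_q(\boldsymbol{x}(t)) \ge 0$ for all $t \ge t_0$, i.e., the set $C = \{\boldsymbol{x}\in\mathbb{R}^n : b_q(\boldsymbol{x}) \ge 0\}$ is forward invariant. The main point requiring care is not the reduction itself, which is essentially the observation that a worst-case-over-noise guarantee implies the guarantee for the actually-realized noise; rather it is the bookkeeping around the time-varying and possibly vanishing class $\mathcal{K}$ function $\kappa_{q,\tau_j}$, and, for full rigor, checking that the inner minimum is attained — which follows from continuity of $b_q$ and its Lie derivatives over the compact noise ball $\{\boldsymbol{w}_{i,j} : \|\boldsymbol{w}_{i,j}\|_\infty \le \epsilon_1\}$.
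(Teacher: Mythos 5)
Your proposal is correct and follows essentially the same route as the paper's own proof: the realized noise is a feasible point of the inner minimization, so the robust condition \eqref{robust_trust_cbf_condition} lower-bounds the nominal trust-aware CBF condition at the true state, and forward invariance of $C$ then follows from the standard CBF result (Theorem 1 of \cite{Xiao_03}, equivalently \cite{Xiao2019}). Your added remarks on well-posedness, attainment of the minimum, and the degenerate case $\tau_j=0$ are careful extras the paper leaves implicit, but they do not change the argument.
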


\begin{proof}
    The satisfaction of \eqref{robust_trust_cbf_condition} guarantees the satisfaction of the constraint \eqref{noisy_cbf_condition} (and \eqref{cbf_condition}) since it is a lower bound for \eqref{cbf_condition} which according to Theorem 1 in \cite{Xiao_03} makes the set $C$ forward invariant $\forall t \geq t_0$ w.r.t \eqref{VehicleDynamics}.
\end{proof} 

Based on the information in the table (as shown in Fig. \ref{fig:intersection}) the coordinator communicates the state information and the trust value of the CAVs in $S_i^p$ and $S_i^M$ corresponding to constraints \eqref{Safety} and \eqref{SafeMerging} respectively to each CAV $i$ in the CZ.

The OCBF problem corresponding to \eqref{eqn:energyobja} is formulated as:
\begin{equation}\label{QP-OCBF}\small
\min_{u_i(t),e_i(t)}J_i(u_i(t),e_i(t)):=\int_{t_i^0}^{t_i^f}\big[\frac{1}{2}(u_i(t)-u_{i}^{ref}(t))^2+\lambda e^2_i(t)\big]dt
\end{equation}
subject to vehicle dynamics (\ref{VehicleDynamics}), the CBF constraints \eqref{robust_trust_cbf_condition}, $\forall q=\{1,...,n\}$ and CLF constraint \eqref{CLF_constraint}. In this approach, $u_i^{ref}$ is generated by solving the \emph{unconstrained} optimal control problem in \eqref{eqn:energyobja} which can be analytically obtained. The resulting control reference trajectory is optimally tracked subject to the constraints.

\subsection{Event-triggered Control}
A common way to solve (\ref{QP-OCBF})) is to discretize $[t_i^0,t_i^f]$ into intervals $[t_i^0,t_i^0+\Delta],...,[t_i^0+k\Delta,t_i^0+(k+1)\Delta],...$ with equal length $\Delta$ and solving (\ref{QP-OCBF}) over each time interval. The decision variables $u_{i,k}=u_i(t_{i,k})$ and $e_{i,k}=e_i(t_{i,k})$ are assumed to be constant on each interval and can be easily calculated at time $t_{i,k}=t_i^0+k\Delta$ through solving a QP at each time step:
\begin{align} \label{QP}
\min_{u_{i,k},e_{i,k}}&[ \frac{1}{2}(u_{i,k}-u_i^{ref}(t_{i,k}))^2+\lambda e_{i,k}^{2}]
\end{align}
subject to the CBF constraints \eqref{robust_trust_cbf_condition}, $\forall q=\{1,...,n\}$, CLF constraint (\ref{CLF_constraint}) and dynamics \eqref{VehicleDynamics}, where all constraints are linear in the decision variables.

This is referred to as the \emph{time-driven} approach. The main problem with this approach is that there is no guarantee for the feasibility of each CBF-based QP, as it requires a small enough discretization time which is not always possible to achieve. Also, it is worth mentioning that synchronization is required amongst all CAVs which can be difficult to impose in real-world applications.
Therefore, to tackle these issues we adopt an \emph{event-triggered} control scheme inspired by \cite{ahmad_03}. 
Under this scheme, the control for a CAV is updated by solving the QP \eqref{QP} upon the occurrence of any of a predefined set of events (not in the original time-driven fashion) with the goal of ensuring that the state trajectory of the CAV satisfies all the constraints between two consecutive events. We will formulate such a framework for a CAV $i$ w.r.t to another CAV $j$ for a constraint $q \in \{1,\dots,4\}$, corresponding to (\ref{Safety}), (\ref{SafeMerging}) and (\ref{VehicleConstraints1}), which generalizes to every other CAV and constraints. 
Let $t_{i,k}$, and $t_{i,k+1}$ (where $k=1,2,...$), be the time for the $k$-th and $(k+1)$-th event during which vehicle $i$ solves its QP \eqref{QP}. The goal is to guarantee that the state trajectory does not violate any safety constraints within the interval $(t_{i,k},t_{i,k+1}]$. We define $C_{i}$ to be the feasible set of constraints (only dependent on our states \eqref{VehicleConstraints1}) and involving states of another CAV \eqref{Safety},\eqref{SafeMerging}) defined as:
\begin{align}
 \label{event:ci}
    C_{i}\equiv \Bigl\{  &\boldsymbol{x}_{i,j}\in \mathbf{X}^2: b_q(\boldsymbol{x}_{i,j})\geq 0 \text{ and } b_q(\boldsymbol{x}_{i})\geq 0, \nonumber\\ &j \in S_i^P \cup S_i^M \Bigl\}
\end{align} 
We define a compact convex set on the state space of CAV $i$ at time $t_{i,k}$ such that:
\begin{equation} \label{bound} 
X_i(t_{i,k}) = \Bigl\{\boldsymbol{y}_i \in \mathbf{X}: |\boldsymbol{y}_i - \boldsymbol{x}_i(t_{i,k}) | \leq \boldsymbol{s}_{\boldsymbol{x}_i} \Bigl\}
\end{equation} 
where $\boldsymbol{s}_i \in \mathbb{R}_{>0}^2$ is a parameter vector. Similarly, we define a compact convex set on the trust metric: 
\begin{equation} \label{trust_bound} 
\mathcal{T}_j(t_{j,k}) = \Bigl\{\tau_j \in [0,1]: |\tau_j - \tau_i(t_{j,k}) | \leq {s}_{\tau_j} \Bigl\}
\end{equation} 
Intuitively, this choice reflects a trade-off between computational efficiency and conservativeness. A larger choice of value makes the controller conservative requiring less frequent control update thus being more computationally efficient, and vice versa. As we use robust CBFs, we need to modify the previously defined sets to adjust the bounds on noisy states as in \eqref{VehicleDynamics}. At first, we define the feasible set 
of constraints as following:
\begin{align} \label{event:ci_hat}
    \hat{C}_{i}\equiv\Bigl\{  &\boldsymbol{\hat{x}}_{i,j}\in \mathbf{X}^2: \min_{\{\boldsymbol{w}_{i,j}:\|\boldsymbol{w}_{i,j}\|_{\infty} \leq \epsilon_1\}}b_q(\boldsymbol{\hat{x}}_{i,j} - \boldsymbol{w}_{i,j})\geq 0 \text{ and } \nonumber \\
    &\min_{\{\boldsymbol{w}_{i}\|\boldsymbol{w}_{i}\|_{\infty} \leq \epsilon_1\}}b_q(\boldsymbol{\hat{x}}_{i} - \boldsymbol{w}_{i})\geq 0 \Bigl\}
\end{align}
    Note that $\hat{C}_i \subset C_i$
because 
        $b_q(\boldsymbol{\hat{x}}_{i,j} - \boldsymbol{w}_{i,j}) \geq \min_{\{\boldsymbol{w}_{i,j}:\|\boldsymbol{w}_{i,j}\|_{\infty} \leq \epsilon_1\}}b_q(\boldsymbol{\hat{x}}_{i} - w_{i})\geq 0$.
The minimum can be derived in closed form 
as shown below:
\begin{flalign} \label{event:ci_minimum}
     &\min_{\{\boldsymbol{w}_{i,j}:|\boldsymbol{w}_{i,j}\|_{\infty} \leq \epsilon_1\}}b_q(\boldsymbol{\hat{x}}_{i,j} - \boldsymbol{w}_{i,j}) \nonumber\\
     & = \min_{\{\boldsymbol{w}_{i,j}:|\boldsymbol{w}_{i,j}\|_{\infty} \leq \epsilon_1\}}   \hat{x}_{j} - w_j^{(x)} - \hat{x}_i + w_i^{(x)} - \varphi (\hat{v}_{i} - w_i^{(v)}) - \Delta \nonumber \\ 
     &= \hat{x}_{j} - \hat{x}_i - \varphi v_{i} - \Delta -\epsilon_1 (2 + \varphi) \\ \label{event:ci_minimum_state} 
     & \min_{\{\boldsymbol{w}_{i}:|\boldsymbol{w}_{i,j}\|_{\infty} \leq \epsilon_1\}} b_3(\boldsymbol{\hat{x}}_{i} - \boldsymbol{w}_{i})  \nonumber \\ 
     &=\min_{\{\boldsymbol{w}_{i}:|\boldsymbol{w}_{i,j}\|_{\infty} \leq \epsilon_1\}}\hat{v}_i -  w_i^{(v)} - v_{min} \nonumber \\
     &=\hat{v}_i - v_{min} - \epsilon_1 \\ 
     &\min_{\{\boldsymbol{w}_{i}:|\boldsymbol{w}_{i,j}\|_{\infty} \leq \epsilon_1\}} v_{max} - \hat{v}_i + w_i^{(v)} = v_{max} - \hat{v}_i - \epsilon_1
\end{flalign}
We can similarly define $\hat{X}_j(t_{i,k})$:
\begin{equation} \label{robust_bound} 
\hat{X}_i(t_{i,k}) = \Bigl\{\hat{\boldsymbol{y}}_i \in \mathbf{X}: |\boldsymbol{\hat{y}}_i - \boldsymbol{\hat{x}}_i(t_{i,k}) | \leq \boldsymbol{s}_{\boldsymbol{x}_i} - 2[\epsilon_1, \epsilon_1]^T \Bigl\}
\end{equation} 
where $\boldsymbol{\hat{y}}_i = \boldsymbol{{y}}_i +\boldsymbol{w}_i$. Note that $\hat{X}_i(t_{i,k}) \subset X_i(t_{i,k})$ since
    \begin{equation}
    \begin{aligned}
        |\boldsymbol{y}_i - \boldsymbol{x}_i(t_{i,k})| &=  |\boldsymbol{\hat{y}}_i - \boldsymbol{w}_i - \boldsymbol{\hat{x}}_i(t_{i,k}) + \boldsymbol{w}_i(t_{i,k})| \\ \nonumber 
        &\le |\boldsymbol{\hat{y}}_i - \boldsymbol{\hat{x}}_i(t_{i,k}) + 2\|\boldsymbol{w}\|_{\infty} [1,1]^T | \\ \nonumber
        &= |\boldsymbol{\hat{y}}_i - \boldsymbol{\hat{x}}_i(t_{i,k}) | + 2[\epsilon_1, \epsilon_1]^T
    \end{aligned}
\end{equation}
Thus, $|\boldsymbol{\hat{y}}_i - \boldsymbol{\hat{x}}_i(t_{i,k}) | \le \boldsymbol{s}_{\boldsymbol{x}_i} - 2[\epsilon_1, \epsilon_1]^T \Rightarrow |\boldsymbol{y}_i - \boldsymbol{x}_i(t_{i,k})| \le \boldsymbol{s}_{\boldsymbol{x}_i} $. 

Next, we seek a bound and a control law that satisfies the safety constraints within this bound. This can be accomplished by considering the minimum value of each component of \eqref{robust_trust_cbf_condition} as shown next. 
For the first term, let
\begin{equation}\label{minfi}
b^{min}_{q,f_i}(t_{i,k})=\displaystyle\min_{{{\boldsymbol{\hat{y}}_i \in {S}_i({t_{i,k}}) \atop \boldsymbol{\hat{y}}_j \in {S}_{i,j}({t_{i,k}}) }} \atop \{\boldsymbol{w}_{i,j}: \|\boldsymbol{w}_{i,j}\|_{\infty} \leq \epsilon_1\}} L_fb_q\big(\boldsymbol{\hat{y}}_{i,j}(t_{i,k}) - \boldsymbol{w}_{i,j}\big)
\end{equation} 
where $\boldsymbol{\hat{y}}_{i,j}({t_{i,k}}) = [\boldsymbol{\hat{y}}_i({t_{i,k}}), \boldsymbol{\hat{y}}_j({t_{i,k}})]^T,  {S}_i({t_{i,k}}):=(\hat{C}_i \cap \hat{X}_i(t_{i,k}))$, and $ {S}_{i,j}({t_{i,k}}):=  \hat{X}_j(t_{i,k})$. Similarly, we can define the minimum value of the third term in \eqref{robust_trust_cbf_condition}:
\begin{equation}\label{minkappai}
b^{min}_{\kappa_q}(t_{i,k})=\displaystyle\min_{{{\boldsymbol{\hat{y}}_i \in {S}_i({t_{i,k}}) \atop \boldsymbol{\hat{y}}_j \in {S}_{i,j}({t_{i,k}}) } \atop \tau_j \in \mathcal{T}_{j}({t_{i,k}})} \atop \{\boldsymbol{w}_{i,j}: \|\boldsymbol{w}_{i,j}\|_{\infty} \leq \epsilon_1\}} \kappa_{q,\tau_j}\Big(b_q(\boldsymbol{\hat{y}}_{i,j}(t_{i,k}) - \boldsymbol{w}_{i,j})\Big).
\end{equation} 
For the second term in \eqref{robust_trust_cbf_condition}, if it is not constant then the limit value $b^{min}_{2,g_i}(t_{i,k}) \in \mathbb{R}$ can be determined as follows:
\begin{eqnarray}\label{mingi} \small
b^{min}_{q,g_i}(t_{i,k})=\begin{cases}
\displaystyle\min_{{{\boldsymbol{\hat{y}}_i \in {S}_i({t_{i,k}}) \atop \boldsymbol{\hat{y}}_j \in {S}_{i,j}({t_{i,k}})}} \atop \{\boldsymbol{w}_{i,j}: \|\boldsymbol{w}_{i,j}\|_{\infty} \leq \epsilon_1\}} L_gb_q(b_q(\boldsymbol{\hat{y}_{i,j}}(t_{i,k}) - \boldsymbol{w}_{i,j}), \  \\ \textnormal{if}\  u_{i,k} \geq 0\\
\\
\displaystyle\max_{{{\boldsymbol{\hat{y}}_i \in {S}_i({t_{i,k}}) \atop \boldsymbol{\hat{y}}_j \in {S}_{i,j}({t_{i,k}}) }} \atop \{\boldsymbol{w}_{i,j}: \|\boldsymbol{w}_{i,j}\|_{\infty} \leq \epsilon_1\}} L_gb_q(b_q(\boldsymbol{\hat{y}_{i,j}}(t) - \boldsymbol{w}_{i,j})),  \\ \textnormal{otherwise},
\end{cases}
\end{eqnarray}
where the sign of $u_{i,k}$ can be determined by simply solving the CBF-based QP  \eqref{QP-OCBF} at time $t_{i,k}$.

Thus, the condition that can guarantee the satisfaction of a CBF constraint in the interval $\left(t_{i,k},t_{i,k+1}\right]$ is given by
\begin{equation} \label{minCBF}
b^{min}_{q,f_i}(t_{i,k})+b^{min}_{q,g_i}(t_{i,k})u_{i,k}+b^{min}_{\kappa_q}(t_{i,k})\geq 0,
\end{equation}
for $q \in \{1,\dots,4\}$. Note that the minimizations in \eqref{minfi}, \eqref{mingi} and \eqref{minkappai} are simple linear programs whose closed form solution can be easily derived. 
In order to apply this condition to the QP \eqref{QP}, we just replace \eqref{robust_trust_cbf_condition} by \eqref{minCBF} as follows:
\begin{align} \label{eq:QPtk} \small
\min_{u_{i,k},e_{i,k}} \big[ \frac{1}{2}\big(u_{i,k}-u_i^{ref}(t_{i,k})\big)^2+\lambda e_{i,k}^{2}\big]\ \ \textnormal{s.t.} \ \  \eqref{CLF_constraint},\eqref{minCBF}, \eqref{VehicleConstraints2}
\end{align} 

Finally, we can determine $t_{i,k+1}$, the next time that a solution of the QP \eqref{eq:QPtk} must be solved, as follows:
\begin{align} \label{events}
t_{i,k+1}=\min \Big\{ & t>t_{i,k}:\vert\hat{\boldsymbol{x}}_i(t)-\hat{\boldsymbol{x}}_i(t_{i,k})\vert\ge\boldsymbol{s}_{\boldsymbol{x}_i} - 2[\epsilon_1,\epsilon_1]^T \\ \nonumber
&\text{or} \ \ \vert\hat{\boldsymbol{x}}_{j}(t)-\hat{\boldsymbol{x}}_{j}(t_{i,k})\vert\ge\boldsymbol{s}_{\boldsymbol{x}_j} - 2[\epsilon_1,\epsilon_1]^T, \ \forall j \\ \nonumber 
&\text{or} \ \ \vert\tau_{j}(t)-\tau_{j}(t_{i,k})\vert\ge\boldsymbol{s}_{\tau_j} \forall j\Big\}, \ \ \ t_{i,1}=0
\end{align}

The following theorem formalizes our analysis by showing that if new constraints of the general form \eqref{minCBF} hold, then our original CBF constraints \eqref{trust_cbf_condition} also hold. The proof follows the same lines as that of a more general theorem in \cite{2022XiaoEventAuto}.

\begin{theorem}\label{as:1} Given a CBF $b_q(\boldsymbol{x}_{i,j}(t))$ with relative degree one, let $t_{i,k+1}$, $k=1,2,\ldots$ be determined by \eqref{events} with $t_{i,1}=0$ and $b^{min}_{q,f_i}(t_{i,k})$, $b^{min}_{\gamma_q}(t_{i,k})$, $b^{min}_{q,g_i}(t_{i,k})$ obtained through \eqref{minfi}, \eqref{minkappai}, and \eqref{mingi}. Then, any control input $u_{i,k}$ that satisfies \eqref{minCBF} for all $q \in \{1,\dots, 8\}$  within the time interval $[t_{i,k},t_{i,k+1})$ renders the set $\hat{C}_i$ and therefore $C_i$ forward invariant for the dynamic system defined in (\ref{VehicleDynamics}).
\end{theorem}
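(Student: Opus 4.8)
The plan is to establish that the event rule \eqref{events} confines the estimated states and the trust values to precisely the bounded sets over which the minima \eqref{minfi}--\eqref{mingi} are taken, so that \eqref{minCBF} becomes a valid lower bound for the robust trust-aware CBF inequality \eqref{robust_trust_cbf_condition} on each inter-event interval; Lemma~\ref{lem:robust_CBF} together with an induction over the intervals then closes the argument. The relative-degree-one hypothesis is what makes \eqref{robust_trust_cbf_condition} the correct first-order condition and thus places us in the setting of Lemma~\ref{lem:robust_CBF}.

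First I would fix an interval $[t_{i,k},t_{i,k+1})$ and read off \eqref{events}: $t_{i,k+1}$ is, by construction, the \emph{first} instant after $t_{i,k}$ at which one of $|\boldsymbol{\hat{x}}_i(t)-\boldsymbol{\hat{x}}_i(t_{i,k})|\ge\boldsymbol{s}_{\boldsymbol{x}_i}-2[\epsilon_1,\epsilon_1]^T$, $|\boldsymbol{\hat{x}}_j(t)-\boldsymbol{\hat{x}}_j(t_{i,k})|\ge\boldsymbol{s}_{\boldsymbol{x}_j}-2[\epsilon_1,\epsilon_1]^T$ (for some relevant $j$), or $|\tau_j(t)-\tau_j(t_{i,k})|\ge s_{\tau_j}$ (for some $j$) holds. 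Hence for every $t\in[t_{i,k},t_{i,k+1})$ we have $\boldsymbol{\hat{x}}_i(t)\in\hat{X}_i(t_{i,k})$, $\boldsymbol{\hat{x}}_j(t)\in\hat{X}_j(t_{i,k})$, and $\tau_j(t)\in\mathcal{T}_j(t_{i,k})$. Using the inductive hypothesis $\boldsymbol{\hat{x}}_{i,j}(t_{i,k})\in\hat{C}_i$ and the (robust) CBF forward-invariance property, I would then argue that $\boldsymbol{\hat{x}}_{i,j}(t)$ stays in $\hat{C}_i$ on the interval as well, so that the estimated configuration realized by the trajectory lies, at every time of the interval, in the feasible region over which \eqref{minfi}, \eqref{minkappai}, \eqref{mingi} minimize, while the true noise $\boldsymbol{w}_{i,j}(t)$ lies in the ball $\|\boldsymbol{w}_{i,j}\|_\infty\le\epsilon_1$ over which they also minimize.

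Next I would add the three bounds. Since $u_{i,k}$ is constant on $[t_{i,k},t_{i,k+1})$ and its sign is known from the QP solved at $t_{i,k}$, the case split in \eqref{mingi} guarantees $L_gb_q(\boldsymbol{\hat{x}}_{i,j}(t)-\boldsymbol{w}_{i,j})\,u_{i,k}\ge b^{min}_{q,g_i}(t_{i,k})\,u_{i,k}$; similarly $L_fb_q(\boldsymbol{\hat{x}}_{i,j}(t)-\boldsymbol{w}_{i,j})\ge b^{min}_{q,f_i}(t_{i,k})$ and, using also $\tau_j(t)\in\mathcal{T}_j(t_{i,k})$, $\kappa_{q,\tau_j}(b_q(\boldsymbol{\hat{x}}_{i,j}(t)-\boldsymbol{w}_{i,j}))\ge b^{min}_{\kappa_q}(t_{i,k})$ --- all three uniformly in $\boldsymbol{w}_{i,j}$ with $\|\boldsymbol{w}_{i,j}\|_\infty\le\epsilon_1$. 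Adding these and invoking \eqref{minCBF} shows the sum of the three terms is nonnegative for every such $t$ and every admissible $\boldsymbol{w}_{i,j}$; taking the minimum over $\boldsymbol{w}_{i,j}$ this is exactly \eqref{robust_trust_cbf_condition} at time $t$. By Lemma~\ref{lem:robust_CBF} the controller therefore keeps the trajectory in $\hat{C}_i$ throughout $[t_{i,k},t_{i,k+1})$, so $\boldsymbol{\hat{x}}_{i,j}(t_{i,k+1})\in\hat{C}_i$, which is the inductive hypothesis for the next interval. Since these intervals cover $[t_i^0,t_i^f)$ and $\boldsymbol{\hat{x}}_{i,j}(t_{i,1})=\boldsymbol{\hat{x}}_{i,j}(0)\in\hat{C}_i$ by assumption, $\hat{C}_i$ is forward invariant; and since $\hat{C}_i\subset C_i$ (noted just before \eqref{event:ci_minimum}: a point of $\hat{C}_i$ keeps $b_q\ge0$ even after subtracting worst-case noise, so the true state lies in $C_i$), $C_i$ is forward invariant as well.

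The step I expect to be the main obstacle is the one in the second paragraph that hides a potential circularity: to use the minima one needs the trajectory to remain in $\hat{C}_i$, yet remaining in $\hat{C}_i$ is itself a consequence of the CBF inequality holding. I would break this exactly as in the event-triggered CBF analysis of \cite{2022XiaoEventAuto}: induct over inter-event intervals and, within each interval, argue by continuity --- the trajectory cannot leave $\hat{C}_i$ because at any boundary point where $b_q=0$ the already-validated inequality \eqref{robust_trust_cbf_condition} forces $\dot{b}_q\ge0$, while the event rule independently keeps the state inside $\hat{X}_i(t_{i,k})$, $\hat{X}_j(t_{i,k})$, $\mathcal{T}_j(t_{i,k})$, so it never exits the set on which the minima were computed. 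A minor bookkeeping point worth recording is that $S_i(t_{i,k})$, $S_{i,j}(t_{i,k})$, $\mathcal{T}_j(t_{i,k})$ and the noise ball are compact, so the minima \eqref{minfi}--\eqref{mingi} are attained and, being linear programs, finite and available in closed form --- which is what makes \eqref{minCBF} a well-posed linear constraint in the QP \eqref{eq:QPtk}.
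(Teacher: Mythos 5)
Your proposal is correct and follows essentially the same route as the paper: the paper's (very terse) proof simply chains \eqref{minCBF} to the trust-aware CBF condition and then to forward invariance via Lemma~\ref{lem:robust_CBF}, deferring the inter-event details to the more general theorem in \cite{2022XiaoEventAuto}, which is precisely the event-rule confinement, term-by-term lower bounding, and interval-by-interval induction you spell out. The only difference is level of detail --- you make explicit (and correctly resolve, via the continuity/boundary argument) the circularity that the paper leaves to the cited reference.
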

\begin{proof}
    The satisfaction of \eqref{minCBF} satisfies the constraint \eqref{trust_cbf_condition} which in turn satisfies \eqref{robust_trust_cbf_condition} which makes $C_i$ forward invariant based on Lemma \ref{lem:robust_CBF}.
\end{proof}

\begin{corollary} \label{cor:1}
    The satisfaction of \eqref{minCBF} corresponding to \eqref{Safety}, \eqref{SafeMerging} and \eqref{VehicleConstraints1}, subject to \eqref{VehicleConstraints2} guarantees satisfaction of the constraints \eqref{Safety} and \eqref{SafeMerging} for $\|\boldsymbol{x}_i(t) - \boldsymbol{\hat{x}}_i(t)\|_{\infty} \le \epsilon_1 \ \forall t, \forall i \in S(t)$.
\end{corollary}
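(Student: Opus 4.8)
The plan is to chain Lemma~\ref{lem:robust_CBF} and Theorem~\ref{as:1} with the set inclusion $\hat{C}_i\subset C_i$ established below \eqref{event:ci_hat}, and then to observe that membership of the true state in $C_i$ is, by construction of the barrier functions, literally the assertion that \eqref{Safety} and \eqref{SafeMerging} hold.

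First I would fix a CAV $i$ and the barrier functions $b_q$ associated respectively with \eqref{Safety}, \eqref{SafeMerging} and \eqref{VehicleConstraints1}, all of relative degree one under \eqref{VehicleDynamics}. Applying Theorem~\ref{as:1}: since \eqref{minCBF} is linear in $u_{i,k}$ and \eqref{VehicleConstraints2} is enforced as a hard constraint in the QP \eqref{eq:QPtk}, a control $u_{i,k}$ that satisfies \eqref{minCBF} on each inter-event interval $[t_{i,k},t_{i,k+1})$, with $t_{i,k+1}$ generated by \eqref{events}, implies that the robust condition \eqref{robust_trust_cbf_condition} holds for all $t$; by Lemma~\ref{lem:robust_CBF} the estimated-state set $\hat{C}_i$ of \eqref{event:ci_hat} is then forward invariant along the closed-loop trajectory, hence so is $C_i$.

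Next I would transfer forward invariance to the true state. If $\boldsymbol{\hat{x}}_{i,j}(t)\in\hat{C}_i$ then, for the realized noise $\boldsymbol{w}_{i,j}(t)$ which satisfies $\|\boldsymbol{w}_{i,j}(t)\|_\infty\le\epsilon_1$ by \eqref{noisy_state_estimate}, we have $b_q(\boldsymbol{x}_{i,j}(t)) = b_q(\boldsymbol{\hat{x}}_{i,j}(t)-\boldsymbol{w}_{i,j}(t)) \ge \min_{\|\boldsymbol{w}\|_\infty\le\epsilon_1} b_q(\boldsymbol{\hat{x}}_{i,j}(t)-\boldsymbol{w}) \ge 0$, i.e. $\boldsymbol{x}_{i,j}(t)\in C_i$ for every $t\in[t_i^0,t_i^f]$, provided this inclusion holds at $t_i^0$. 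For $q$ corresponding to \eqref{Safety} the barrier is $b_q(\boldsymbol{x}_{i,j})=x_{i_p}(t)-x_i(t)-\varphi v_i(t)-\Delta$, so $b_q\ge0$ for all $t$ is verbatim \eqref{Safety}; for the merging barrier, $b_q\ge0$ holds for all $t$ and in particular at $t_i^m$, which by the OCBF construction of \cite{Xiao_03} is exactly \eqref{SafeMerging}; the speed barrier simultaneously gives \eqref{VehicleConstraints1}.

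The main obstacle I expect is not the CBF algebra but the two ``boundary'' issues it glosses over: (i) establishing initial feasibility at $t_i^0$, i.e. that the coordinator admits CAV $i$ only when its measured state already lies inside the $\epsilon_1$-inflated safe region with respect to $i_p$ and to the CAVs in $S_i^M(t)$, so that $\boldsymbol{\hat{x}}_{i,j}(t_i^0)\in\hat{C}_i$; and (ii) justifying that forward invariance of the whole-horizon merging barrier collapses to the pointwise inequality \eqref{SafeMerging} evaluated at the merging time $t_i^m$. I would handle (i) by stating the admission rule explicitly as part of the coordination protocol and (ii) by invoking the construction of the safe-merging CBF in \cite{Xiao_03}; with those in place the corollary follows directly from Theorem~\ref{as:1} and the inclusion $\hat{C}_i\subset C_i$.
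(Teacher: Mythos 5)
Your proposal follows essentially the same route as the paper's own (one-line) proof: invoke Theorem~\ref{as:1} (hence Lemma~\ref{lem:robust_CBF}) to obtain forward invariance of $\hat{C}_i$ and thereby $C_i$, and use the bound $\|\boldsymbol{x}_i(t)-\boldsymbol{\hat{x}}_i(t)\|_\infty\le\epsilon_1$ to conclude that the true-state constraints \eqref{Safety} and \eqref{SafeMerging} hold. Your added explicitness about transferring from the estimated to the true state, and the caveats on initial feasibility at $t_i^0$ and the pointwise evaluation of \eqref{SafeMerging} at $t_i^m$, are sound refinements of details the paper leaves implicit, not a different argument.
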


\begin{proof}
    The satisfaction of \eqref{minCBF} makes the set $\hat{C}_i$ and correspondingly set $C_i$ for \eqref{Safety}, \eqref{SafeMerging} and \eqref{VehicleConstraints1} forward invariant from Theorem \eqref{as:1} for any $\|\boldsymbol{w}_i(t)\|_{\infty} = \|\boldsymbol{x}_i(t) - \boldsymbol{\hat{x}}_i(t)\|_{\infty} \le \epsilon_1$ guaranteeing their satisfaction $\forall t, \forall i \in S(t)$.  
\end{proof}

\begin{corollary}
    The trust-based coordination in conjunction with control using robust trust-aware CBFs guarantees safe navigation of CAVs against Sybil attacks and Stealthy attacks.
\end{corollary}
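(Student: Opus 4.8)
The plan is to reduce \emph{safe navigation} to its definition---the satisfaction of \eqref{Safety} and \eqref{SafeMerging} for every real CAV while conforming to \eqref{VehicleConstraints1}--\eqref{VehicleConstraints2}---and then establish this for each real CAV $i\in S_x(t)$ under both attack types by invoking the already-proved Lemma~\ref{lem:robust_CBF}, Theorem~\ref{as:1} and Corollary~\ref{cor:1}. The first step I would take is to observe that, by Assumption~\ref{accident_assumption} and because a spoofed identity has no physical presence, a collision can only occur between real CAVs; hence it suffices to certify forward invariance of $C_i$ for every $i\in S_x(t)$, and the speed/acceleration limits \eqref{VehicleConstraints1}--\eqref{VehicleConstraints2} are enforced by construction of the QP \eqref{eq:QPtk}.

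Next I would dispatch the \textbf{stealthy/BI attack} case. The defining property of such an attack is that the data actually used by CAV $i$---whether relayed by the RSU or obtained from local sensing---deviates from the true data by at most $\epsilon_1$ in the infinity norm, which is structurally identical to the measurement-noise model \eqref{noisy_state_estimate} with $\epsilon=\epsilon_1$. Since each robust trust-aware CBF constraint \eqref{robust_trust_cbf_condition}, and its event-triggered surrogate \eqref{minCBF}, is constructed to hold in the worst case over \emph{all} perturbations bounded by $\epsilon_1$, the conclusion of Corollary~\ref{cor:1} applies verbatim with $\boldsymbol{w}_i(t)$ reinterpreted as the adversarial injection $\boldsymbol{g}_i(t)$, $\|\boldsymbol{g}_i(t)\|_\infty\le\epsilon_1$ (or as the sum of noise and injection, which remains $\epsilon_1$-bounded under the BDD/stealthiness assumption). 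Thus \eqref{Safety} and \eqref{SafeMerging} are maintained for all $t$ regardless of any stealthy attack.

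The core of the argument, and the step I expect to be the main obstacle, is the \textbf{Sybil attack} case: I must show that the trust-based coordination together with local sensing always keeps a \emph{real} reference CAV in the constraint set of each real CAV $i$. Since every CAV---fake ones included---enters the CZ with trust $\tau_i(t_i^0)=0<1-\delta$, and negative evidence dominates positive evidence, a spoofed CAV's trust stays below the threshold $1-\delta$ in the nominal case; the trust-based search over each MP in $m_i$ therefore does not terminate at a fake CAV and returns $S_i^p(t)\cup S_i^M(t)$ containing the first trustworthy real CAV, against which the robust event-triggered CBF \eqref{minCBF} is enforced, certifying \eqref{Safety}/\eqref{SafeMerging} with respect to real traffic. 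For the adversarial case in which the trust framework errs and a spoofed CAV is wrongly assigned trust $\ge 1-\delta$---precisely the false-positive situation the paper claims robustness to---I would instead invoke the local-sensing assumption: CAV $i$ additionally imposes constraints of the form \eqref{Safety} with respect to its physically present immediate predecessor and the CAV that will physically precede it at its next MP (both assumed visible), and these real constraints are again certified forward invariant by Theorem~\ref{as:1}. In either sub-case the constraints induced by fake CAVs are merely extra (at worst conservative) constraints that never produce a safety violation against real CAVs; dually, a real CAV wrongly assigned low trust only makes the controller more conservative and hence cannot violate safety either.

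Finally I would note that the two mechanisms compose cleanly: the robust CBF absorbs the $\epsilon_1$-bounded corruption of whatever state data is consumed, while the trust-based search and local sensing jointly guarantee that some real CAV is always constrained against; hence forward invariance of $\hat{C}_i$, and therefore of $C_i$, holds for every real $i\in S_x(t)$ under any combination of Sybil and stealthy attacks. By the definition of safe coordination this is exactly the claimed guarantee of safe navigation of CAVs.
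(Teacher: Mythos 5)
Your proposal is correct and follows the same overall decomposition as the paper: the Sybil case is discharged by the trust-based coordination (zero-trust initialization plus the search that does not terminate at a low-trust CAV, so the relevant real CAVs remain in the constraint set), and the stealthy/BI case is discharged by the robust trust-aware CBFs via Corollary~\ref{cor:1}, since an $\epsilon_1$-bounded injection is structurally the same perturbation as the noise model \eqref{noisy_state_estimate}. The difference is one of self-containment: the paper's proof is two lines, delegating the entire Sybil-attack half to Theorem~1 of \cite{ahmad_02} and simply composing it with Corollary~\ref{cor:1}, whereas you reconstruct that half from first principles (fake CAVs never reach the $1-\delta$ threshold, hence the trust-based search returns a set containing the genuinely relevant real CAVs; the wrongly-trusted-fake sub-case is covered by the local-sensing constraints of the form \eqref{Safety}). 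Your version buys a proof that does not lean on the external reference and makes explicit the role of Assumption~\ref{accident_assumption} and local perception; the paper's version is shorter but opaque about exactly which mechanism covers which failure mode. The only point to state carefully is your parenthetical about noise plus injection: what the argument needs is that the \emph{total} deviation $\|\boldsymbol{x}_i(t)-\boldsymbol{\hat{x}}_i(t)\|_\infty$ consumed by the controller stays within $\epsilon_1$, which is exactly what the BDD assumption (with $\epsilon=\epsilon_1$) provides; phrased that way, Corollary~\ref{cor:1} applies verbatim and no separate bound on the sum is required.
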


\begin{proof}
    The trust-based search guarantees safe coordination against Sybil attacks as proved in \cite{ahmad_02} (in Theorem 1) and in conjunction with robust trust-aware CBFs from Corollary \ref{cor:1} makes our control and coordination framework safe against Sybil and Stealthy attacks. 
\end{proof}

\section{Attack Detection and Mitigation}
\label{mitigation}
Our proposed robust control scheme offers provably safe coordination against adversarial attacks. However, there are scenarios where attackers may target the network performance by causing traffic holdup. This is possible with Sybil attacks, as illustrated in \cite{ahmad_01}, necessitating attack mitigation besides safety guarantees. The problem of detection involves the identification of adversarial (or, spoofed) CAVs accurately and mitigation can be defined as reestablishing the normal cooperation in the network close to what it would be in the ideal scenario without any attack. Resilience is necessary to ensure safe coordination until the attack is detected and in the presence of any false identification of adversarial (or, spoofed) CAVs. In this section, we present our proposed mitigation framework based on the trust framework with the aforementioned objective.

\subsection{Determination of Fake CAVs} Initially, every CAV is considered untrustworthy (i.e., $\tau_i(t_i^0) = 0$). Upon arrival in the CZ, the coordinator monitors the trust for each CAV and, if it detects any CAV $i \in S(t)$ s.t. $\tau_i(t) \leq 1 - \delta$ and $\tau_i(t) \leq \tau_i(t-1)$, it initiates an observation window for that particular CAV of length $\eta$. If the trust for CAV $i$ is non-increasing and stays below the threshold of $1 - \delta$ during the observation window then the coordinator proceeds to the mitigation step.

\subsection{Robust Mitigation}
The most trivial strategy that can be adopted is to rescind cooperation with the fake CAVs; however, it is essential to note that our  framework can output false positives (although highly unlikely if the priorities of the behavioral specifications are chosen as mentioned in \cite{ahmad_02}). 
Therefore, we offer a soft mitigation scheme; we call it ``soft'' because it is a passive scheme that relies on the local sensory information of the CAVs. This will become apparent in the remainder of the section. We define a \emph{rescheduling zone} in the CZ of length $L_1$ as shown in Fig. \ref{fig:intersection}. It has been shown that any passing sequence can be rescheduled in this area in \cite{ahmad_03}. Then, we present the following definitions.

\begin{definition}
    (Explicitly constrained agent) An agent $i$ is called \emph{explicitly constrained} by an agent $j$ at time $t$ if it has a constraint directly involving states of agent $j$ at that time.
\end{definition}

\begin{definition}
    (Implicitly constrained agent) An agent $i$ is called \emph{implicitly constrained} by an agent $j$ at time $t$ if there is any other agent $k$ in the environment constrained by $j$, which constrained agent $i$.
\end{definition}


We mitigate the effect of fake CAVs by unconstraining the CAVs that are explicitly constrained by them (including the physically following CAVs if they are within their perception range and do not actually see any vehicle ahead) by solving the Integral Linear Program (ILP) defined below. Let the set of the ordered indices of detected fake CAVs that we want to mitigate be denoted as $S_f(t)$. We define the index $k_{min}=\min S_f\left(t\right)$ as the index of the first (fake) CAV in the queue to re-sequence from and $S_+(k_{min}) = \{k_{min},\dots,N(t)\}$. Then, the ILP is formulated as follows:
\begin{align}
    \max_{i \in S_f(t)} &\sum_{i \ \in S_f(t)} {a_i}  \label{ILP} \\
    &a_j - a_k \geq \nu, \ \forall k \in \bar{S}_f(t) \cap S_{j}^p(t), \nonumber \\ 
    & \text{and } j \in S_+(k_{min})   \label{rearend_ILP} \\
    &a_j - a_k \geq \nu,  j \in S_+(k_{min}), k \in S_{j}^M(t)   \label{merging_ILP} \\
    &a_j \neq a_k \ j,k \in S_+(k_{min}) \nonumber \\
    &\left\{a_{k_{min}}, \ldots, a_N(t)\right\} \in S_+(k_{min}); \nu \geq 1 \label{resequencing_mitigation}
\end{align}
where \eqref{rearend_ILP} correspond to constraint \eqref{Safety}, \eqref{merging_ILP} correspond to constraint \eqref{SafeMerging}, $\{a_{k_{min}},\dots, a_{N(t)}\}$ are the new indices of the CAVs in $S_+(k_{min})$. 
 
Based on the above definitions we now outline the scenarios that are of importance to us and derive an approximate solution of \eqref{ILP} for them.
\begin{enumerate}
    \item \textbf{No CAVs are constrained by  CAVs in $S_f(t)$}: In this case, the solution of \eqref{ILP} will reschedule the CAVs starting from index $k = \min S_f(t)$ in $S_f(t)$ by moving them at the end of the queue and move the remaining CAVs with original index $i \geq k \text{ and } i \notin S_f(t)$ ahead in the queue to fill their places in their current order. This process will be repeated $\forall k \in S_f(t)$.
    \item \textbf{There are CAVs in $S_f(t)$ which physically precede another CAV in the CZ}: First, let us consider CAV $k \in S_f(t)$ and $j$ is the index of physically immediately following CAV, and let $S_j^c(t) \subseteq S(t)$ be the set of CAVs explicitly and implicitly constrained by $j$. At first, the CAVs with indices between $k$ to $j-1$ are moved ahead in the queue by incrementing their index by 1, then, we set $k \leftarrow j-1$ where $j-1 > k$. The reason for moving $k$ down the queue up to $j-1$ is because $k$ can be a real CAV which has been falsely identified as a fake CAV. Finally, remove $\{j-1,j\} \cup S_j^c(t)$ from the queue, rearrange the queue by incrementing the indices of the remaining CAVs appropriately in the queue, and add $\{j-1,j\} \cup S_j^c(t)$ in the queue. Then, repeat the process for the remaining CAVs in $S_f(t)$. Finally, update $S_f(t)$ accordingly.
    The final step is needed to move any CAVs $k>j-1$ that are not explicitly constrained, or implicitly constrained by the immediately preceding CAV of CAV $j+1$ ahead of CAV $j-1$ in the queue.    \label{shuffle}
\end{enumerate}

Observe that the rear-end constraints are excluded for the CAVs that are physically immediately following any CAV $k\in S_f(t)$ in \eqref{resequencing_mitigation} to allow CAVs that are physically immediately behind the CAVs in $S_f(t)$ to overtake them \textit{only if} they are not visible when within sensing range. This is necessary to guarantee safety for FP cases which will be described later. 

Moreover, observe that, for CAV $k \in S_f(t)$, upon rescheduling, the index of its immediately following CAV will become $k+1$. Once within the sensing range of CAV $k+1$, if CAV $k$ is not visible, it changes its control in \eqref{QP} by removing the CBF constraint corresponding to CAV $i$ to complete the overtake. The coordinator detects the overtake completion by checking the satisfaction of the inequality in \eqref{overtake_check}, upon which it completes the final step of the \eqref{ILP} by swapping the indices of CAV $k$ and $k+1$ with each other. This step is performed $\forall k \ \in S_f(t)$ and repeated by following the scenarios mentioned previously (i.e., the solution of ILP) until all fake CAVs reach the end of the queue.
\begin{equation}
 \hat{x}_{i}(t) - \hat{x}_{i_{p}}(t) - \varphi \hat{v}_{i_{p}}(t) - \Delta\geq 0\text{ \ } \label{overtake_check}%
\end{equation} 

For FP cases, notice that for any $(j-1) \in S_f(t)$,  $j$ is the CAV physically preceding it and every CAV $j^+ > j$ that is not explicitly or implicitly constrained by $j$ is scheduled ahead of them after the first iteration of the algorithm. There will be no further rescheduling for $j-1$, i.e., there will be no vehicles overtaking it in the same road.

\begin{lemma}
    The proposed mitigation scheme guarantees safety for real CAVs even if they are falsely identified as fake CAVs due to a Sybil attack.
\end{lemma}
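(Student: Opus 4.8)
The plan is to reduce the claim to the forward-invariance guarantee already established for the robust trust-aware CBFs (Theorem~\ref{as:1}, Corollary~\ref{cor:1}, Lemma~\ref{lem:robust_CBF}) and to show that the only point at which the mitigation scheme departs from that guarantee — the temporary removal, in \eqref{QP}, of a rear-end CBF constraint to permit an overtake — is itself safe because it is triggered exactly when the (possibly falsely flagged) preceding CAV is not physically in the way. First I would fix a real CAV $k\in S_f(t)$ that has been erroneously classified as fake. Until an overtake is attempted, $k$ is treated as in the nominal scheme: by construction of the ILP \eqref{ILP}–\eqref{resequencing_mitigation}, $k$ is at first moved down the queue only as far as index $j-1$, where $j$ is its physical follower, and the CAVs in $S_j^c(t)$ that are explicitly or implicitly constrained by $j$ are relocated as a block together with $j$. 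Hence the post-mitigation order is again a valid passing sequence realizable in the rescheduling zone of length $L_1$ (feasibility inherited from \cite{ahmad_03}), and the robust CBFs \eqref{minCBF} re-impose all of \eqref{Safety}–\eqref{VehicleConstraints2} for the new indices; in particular every constraint that protected $k$ (rear-end w.r.t.\ its new predecessor, merging at its MPs) or was imposed on $k$ remains active, so by Theorem~\ref{as:1} and Corollary~\ref{cor:1} it continues to enforce \eqref{Safety} and \eqref{SafeMerging} under the noise bound $\epsilon_1$.

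The crux is the overtake step. The rear-end CBF of the follower $j$ with respect to $k$ is dropped \emph{only} when $k$ lies within $j$'s sensing range $(r,\theta)$ yet is not observed, i.e.\ $k$ is not physically present as $j$'s immediate predecessor. I would split cases: (i) if $k$ is in fact fake there is no vehicle and \eqref{Safety} between $j$ and $k$ is vacuous; (ii) if $k$ is real (the false-positive case), ``not visible while within sensing range'' forces $k$ to be ahead of $j$ by more than the perception radius $r$, or already out of $j$'s lane, so that — under the standard sizing assumption $r > \varphi v_{max} + \Delta + 2\epsilon_1$ — the robust tightening of \eqref{Safety} between $j$ and $k$ holds with margin precisely during the window in which the constraint is inactive. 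When the overtake completes, the coordinator verifies \eqref{overtake_check}, i.e.\ $\hat{x}_{j}-\hat{x}_{k}-\varphi\hat{v}_{k}-\Delta\ge 0$; accounting for $\|\boldsymbol{w}\|_\infty\le\epsilon_1$ this still certifies a positive physical headway with $j$ now ahead of $k$, after which the indices of $k$ and $j$ are swapped and the \emph{robust} rear-end CBF \eqref{minCBF} of the new follower $k$ with respect to the new leader $j$ is reinstated; by Theorem~\ref{as:1}, $\hat C_i$ and hence $C_i$ for \eqref{Safety} are forward invariant from that instant on. Thus there is no safety gap: before the swap the constraint is satisfied by physical separation, and after the swap it is enforced by the CBF.

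Finally I would invoke the ``no further rescheduling'' observation preceding the lemma: after the first iteration, a falsely flagged $j-1$ has every CAV $j^+>j$ that is not constrained by $j$ scheduled ahead of it and will never be overtaken again on the same road, so from then on it is governed solely by its own robust trust-aware CBFs, which keep it in $C_i$; and once all flagged CAVs reach the end of the queue they impose no constraints on the real CAVs ahead, which remain mutually safe by the robust CBFs. Combining the three facts — unchanged protection before any overtake, physical separation during the overtake window, reinstated robust CBFs afterward and for all non-overtaken flagged CAVs — yields that \eqref{Safety} and \eqref{SafeMerging} hold for every real CAV at all times, subject to \eqref{VehicleConstraints1}–\eqref{VehicleConstraints2}, which is the definition of safe coordination restricted to the real CAVs. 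The main obstacle I expect is making the ``not visible $\Rightarrow$ safe headway'' implication rigorous: it requires pinning down the sensing model $(r,\theta)$ and the sizing assumption on $r$ relative to $\Delta$, $\varphi v_{max}$ and $\epsilon_1$, and handling the noisy check \eqref{overtake_check} carefully so that the handoff from geometry-certified to CBF-certified safety leaves no uncovered interval.
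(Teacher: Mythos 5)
Your proposal is correct and rests on exactly the same key observation as the paper's proof: the mitigation is ``soft'' in that a CBF constraint with respect to a flagged CAV is dropped (and an overtake or jump-ahead permitted) only when that CAV is not observed through local perception, so a real CAV that was falsely flagged remains physically visible to its follower and is therefore never endangered, with everything else reducing to the forward-invariance guarantees of the robust trust-aware CBFs already established. The paper's proof is essentially just this passivity argument stated in two sentences; the extra machinery you add (the sizing assumption $r > \varphi v_{max} + \Delta + 2\epsilon_1$ and the noise analysis of \eqref{overtake_check}) is additional rigor that the paper does not invoke.
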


\begin{proof}
    In the rescheduling zone, any real CAV $i \in S(t)\backslash S_f(t)$ only overtakes a CAV $k \in S_f(t)$ if it does not observe $k$ through its local perception. Similarly, any CAV $i \in S(t)\backslash S_f(t)$ only ignores the CBF condition in its control and jumps ahead of a CAV in $S_f(t)$ in the intersection if it does not observe it through its local vision. This makes our proposed mitigation scheme soft (or passive) and guarantees safety for false positive cases i.e., real CAVs which have been misidentified as fake CAVs. 
\end{proof}
   
The fake CAVs are removed from the coordinator queue in one of two ways: (i) the attacker stops sending information about a fake CAV, and (ii) the fake CAV leaves the CZ. 

\begin{figure*}[h]
\begin{center}
 \includegraphics[scale = .1]{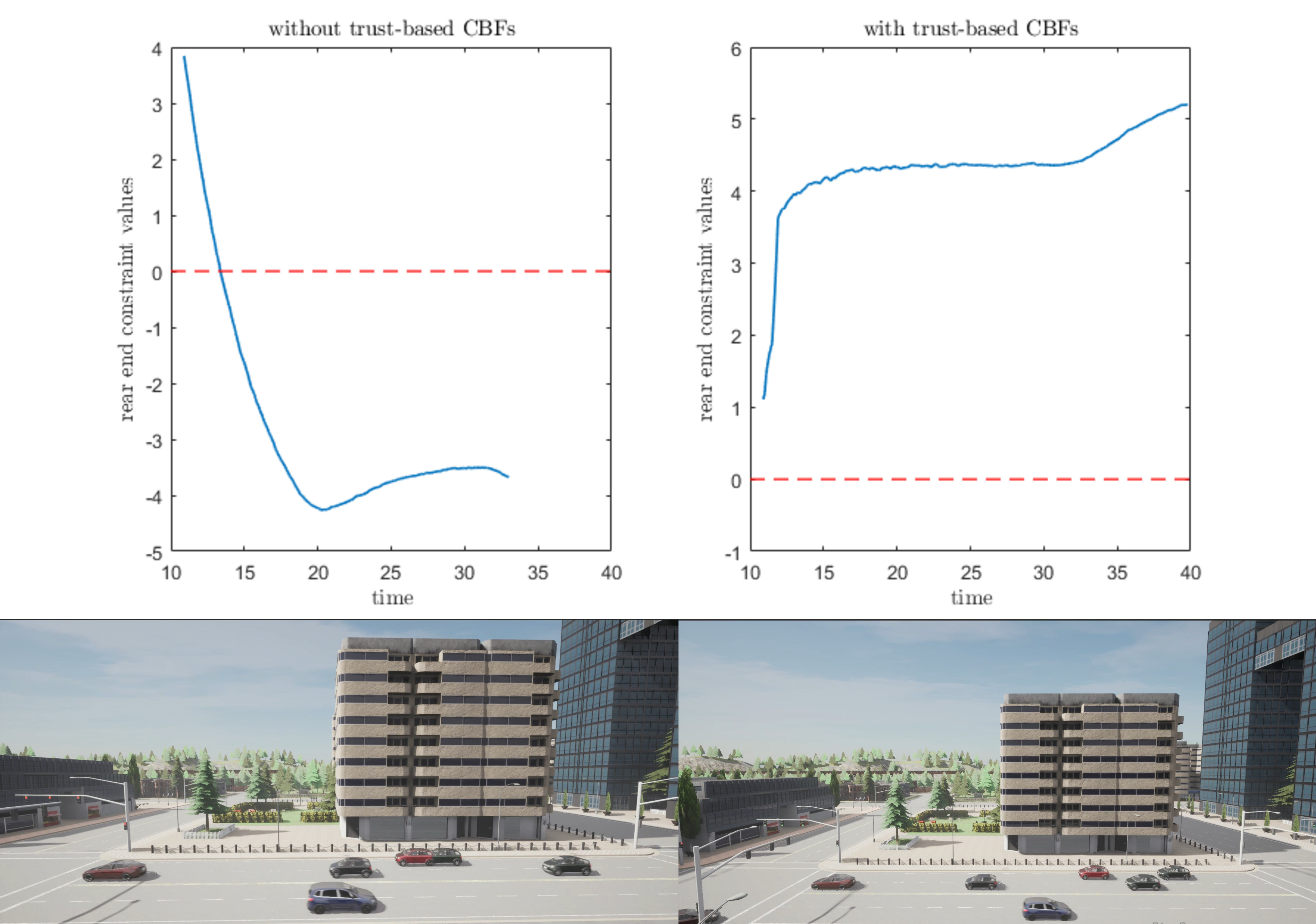}
\caption{Results illustrating the merit of our proposed robust trust-aware event-triggered control scheme. The result was generated by simulating an attack scenario combining BI attack with Sybil attack. As can be seen, the framework in \cite{ahmad_02} results in safety violation (left) which is prevented by our proposed robust trust-aware event- triggered control scheme. The images shown above are from CARLA simulations.} %
\label{fig:accident}%
\end{center}
\end{figure*}

\begin{figure*}[t]
\begin{center}
 \includegraphics[scale = 0.7]{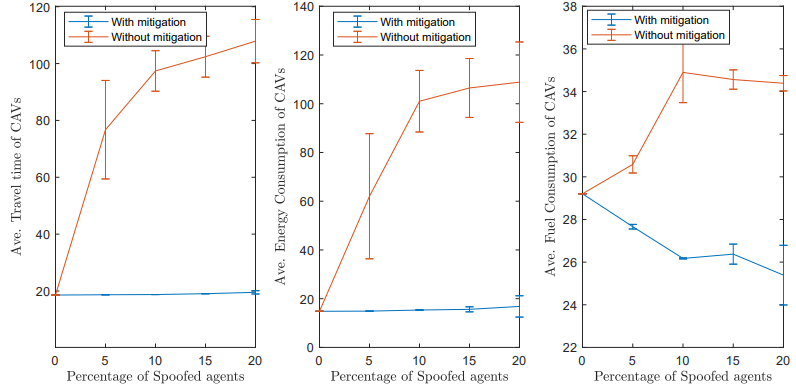}
\caption{The values of average travel time, average energy, and average fuel consumption for real CAVs for different proportions of fake CAVs over 5 runs with and without our proposed mitigation scheme.} %
\label{fig:mitigation_result}%
\end{center}
\end{figure*}

\begin{figure*}[t]
\begin{center}
 \includegraphics[scale = .95]{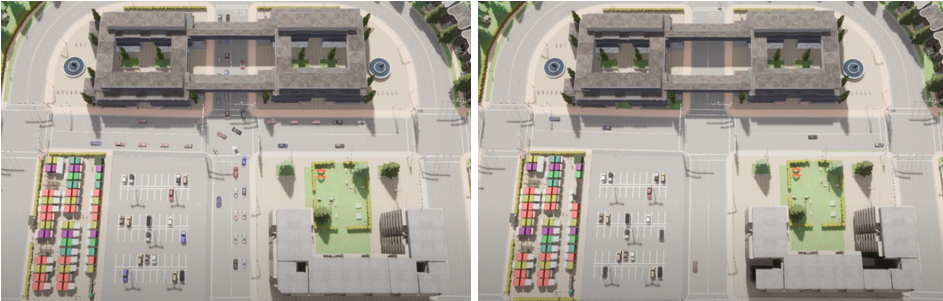}
\caption{The figure shows the performance of the network during a Sybil attack containing six spoofed CAVs without (left) and with (right) our proposed attack mitigation scheme. The picture was taken after 1 minute of running the simulation. The spoofed CAVs were located in three of the eight lanes.} %
\label{fig:mitigation_result_2}%
\end{center}
\end{figure*}


\section{Simulation Results}
\label{results}
In this section, we present simulation results for the application of our proposed trust-aware robust CBF based event-triggered control and coordination scheme, including results for mitigation applied to various attacks mentioned in Section \ref{threat_model}. Throughout, we set $\delta = 0.1$ and $\eta = 40$. The positive and negative evidence magnitudes for the tests in the order they are mentioned in Section \ref{trust_framework} are: $r_i(t) = [0.6,0.6,0.6,0.6]$ and $p_i(t) = [1000,100,50,1] \ \forall i \in S(t)$ and $\forall t$. The intersection dimensions are: $L= 400\textnormal{m}$, $A = 300\textnormal{m}^2$; and the remaining parameters are $\varphi = 1.8\textnormal{s}$,  $\Delta = 3.78 \textnormal{m}$, $\beta_1= 1, u_{\max} = 4.905 \textnormal{m/s}^2, u_{\min} = -5.886\textnormal{m/s}^2, v_{\max} = 108 \textnormal{km/h}, v_{\min} = 0 \textnormal{km/h}$. Finally, we also used a realistic energy consumption model from \cite{Kamal} to supplement the simple surrogate $L_2$-norm ($u^2$) model in our analysis:
$f_{\textrm{v}}(t)=f_{\textrm{cruise}}(t)+f_{\textrm{accel}}(t)$ with
\begin{align*}
    f_{\textrm{cruise}}(t) &= \omega_0+\omega_1v_i(t)+\omega_2v^2_i(t)+\omega_3v^3_i(t),\\
    f_{\textrm{accel}}(t) &=\big(r_0+r_1v_i(t)+r_2v^2_i(t)\big)u_i(t).
\end{align*}
where we used typical values for parameters $\omega_1,\omega_2,\omega_3,r_0,r_1$ and $r_2$ as reported in \cite{Kamal}. The simulation was done in Sumo and Carla, where we used Sumo to generate various traffic scenarios and Carla to validate and evaluate the performance of our proposed schemes. 

\noindent \textbf{Trust-aware CBFs:} We present results comparing trust aware robust CBFs with ordinary CBFs using the event-triggered control framework. The results are summarized in Table \ref{Table_event}, containing simulations for 30 vehicles with a Poisson traffic arrival process whose rate was set to 400 vehicles/hour. In the ordinary CBF case, the class $\mathrm{K}$ function is set to be linear in its argument: $\kappa_q=\kappa^{\prime}_q.b_q(.) $ where $\kappa^{\prime}_q= 0.1$. $\alpha$ is similar to as defined in \eqref{eqn:energyobja}.  We can see the benefits of incorporating the trust metric into CBFs, as there is a mixture of low-trust and high-trust vehicles. As can be seen, integrating trust makes the CBFs less conservative reducing the average travel times of the CAVs in the network and increasing average acceleration, thus improving the throughput of the network. Finally, we notice that this also improves the average fuel consumption of the vehicles in the network.


\begin{table}[h]\scriptsize
        \caption{Event-triggered control performance comparison with and without trust based CBF}
        \centering
        \begin{tabular}{|c|c|c|c|}
            \cline{1-4}
             &Item & CBF with trust & CBF without trust\\
            \cline{2-4}
            \hline
        \multirow{3}{*}{\makecell{$\alpha=0.9$ }} & Ave. Travel time & \textcolor{green}{25} & \textcolor{red}{30.10} \\
        \cline{2-4}
        & Ave. $\frac{1}{2} u^2$ & \textcolor{green}{1.2} & \textcolor{red}{3.10}  \\
        \cline{2-4}
        & Ave. Fuel consumption &\textcolor{green}{17.73} & \textcolor{red}{18.50} \\
        \cline{2-4}
        \hline
        \multirow{3}{*}{\makecell{$\alpha=0.75$ }}  & Ave. Travel time & \textcolor{green}{22.58} & \textcolor{red}{27.70} \\
        \cline{2-4}
        & Ave. $\frac{1}{2} u^2$& \textcolor{red}{3.80} & \textcolor{green}{3.16} \\
        \cline{2-4}
        & Ave. Fuel consumption & \textcolor{green}{17.36} & \textcolor{red}{18.55}  \\
        \cline{2-4}
        \hline
                \multirow{3}{*}{\makecell{$\alpha=0.6$ }}  & Ave. Travel time & \textcolor{green}{22.2} & \textcolor{red}{27.59} \\
        \cline{2-4}
        & Ave. $\frac{1}{2} u^2$&  \textcolor{red}{5.65}& \textcolor{green}{4.75} \\
        \cline{2-4}
        & Ave. Fuel consumption & \textcolor{green}{17.49} & \textcolor{red}{18.65}\\
        \cline{2-4}
        \hline            
        \end{tabular}
        \label{Table_event}
\end{table}



\noindent\textbf{Bias Injection Attack} In order to highlight the robustness of our scheme against stealthy attacks and noise/estimation uncertainties we simulated an attack scenario by combining Sybil attack with BI attack. We compare our framework against the non-robust framework proposed in \cite{ahmad_02} and the results are shown in Fig.~\ref{fig:accident}. As can be seen, the attack violates constraint \eqref{Safety} as shown in the plot of the constraint value (top left) which becomes negative due to the attack. This results in safety violation resulting in collision as shown in the image (on the left). On the other hand our proposed framework ensure safe coordination as can be verified from the plot and the image (on the right).

\noindent\textbf{Mitigation:} The ultimate goal of having mitigation in place is to avoid accidents and minimize the effects of attacks on the performance of the traffic network (i.e., average travel time, average energy consumption, and average fuel consumption). We present our empirical results in Fig. \ref{fig:mitigation_result} by injecting different proportions of fake CAVs during the attack and for each scenario performing 5 runs whose average and standard deviation are shown in the plots. We considered the strategic attacker model presented in \cite{ahmad_01}. It is important to note that this model assumes that the attacker has no access to the RSU. We varied the location of the spoofed CAVs, their initial states, and the proportion of spoofed CAVs across the runs. As can be seen, with our proposed mitigation scheme the average travel time was reduced to almost the same value as the scenario with no attack, thus validating the efficacy of the mitigation scheme in maintaining network performance. In addition, the average energy was also reduced to almost what it was without an attack. Moreover, we notice that the average fuel consumption improves with our proposed mitigation scheme. 

Additionally, we provide a simulation scenario from CARLA during a Sybil attack in Fig. \ref{fig:mitigation_result_2}. The two figures show the network performance with and without our proposed mitigation scheme after 1 min. of starting the simulation. As can be seen, the absence of mitigation causes traffic holdup which is eased with our proposed mitigation scheme. 


\noindent\textbf{False positive case.} As mentioned, our choice of trust framework does not result in false positive cases. However, as our proposed method is invariant to the specific choice of the trust framework, we conducted experiments to analyze scenarios when a real CAV gets falsely identified as spoofed due to a poorly chosen trust framework. We conducted our experiments for various degrees of accuracy of the onboard vision system. For each scenario, we ran 100 experiments and computed the percentage of safe scenarios, with results shown in Fig. \ref{fig:vision_acc}. The experiments were run under different traffic conditions by varying the location of the falsely identified CAV (as spoofed) at the intersection for various values of states for the preceding vehicle(s). An experiment was deemed ``safe'' if there were no collisions between real CAVs upon triggering mitigation. Our experiments show that we can guarantee safety with $95\% - 99.96\%$ accuracy when the accuracy of the onboard object detection pipeline varies from $85\% - 95\%$.    

\begin{figure}[ht]
\begin{center}
 \includegraphics[scale = 0.6]{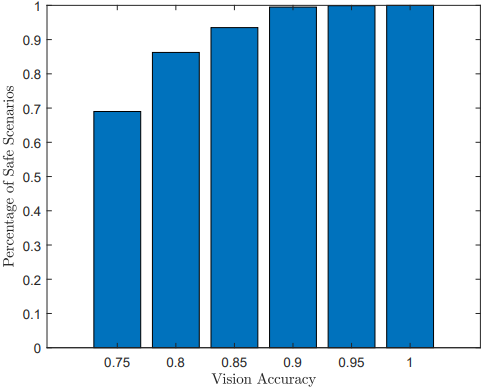}
\caption{Percentage of safe scenarios over 100 runs for different degrees of accuracy of the onboard vision system.} %
\label{fig:vision_acc}%
\end{center}
\end{figure}

\section{Conclusion}
\label{conclusion}
We have addressed Stealthy attacks namely Bias Injection attacks and Sybil attacks on cooperative control of a network of CAVs in a conflicting roadway. We propose decentralized event-triggered control framework using robust trust-aware CBFs. Our proposed framework provides twofold benefits. Firstly, it guarantees provably safe coordination in the presence of adversarial attacks. Secondly, CBFs require choosing a class $\mathcal{K}$ function that inherently poses a tradeoff between conservativeness and safety. We combine trust metric associated to each CAV to balance this tradeoff where the trust of each CAV is intended to reflect the normalcy of a CAV. It is important to note that our proposed framework is invariant to the specific implementation of the trust framework. In addition, we propose a soft attack mitigation scheme to restore normal operation of the road network in the presence of attacks. Our proposed mitigation scheme can guarantee safety coordination against false positive cases. Our simulation results acquired using SUMO and CARLA highlights the merits of our proposed control and coordination scheme and validates their efficacy. In future works, we will extend our work by considering sensor attacks in particular attacks on the Vision, Radar and LIDAR systems along with attacks on in-vehicular network.

\bibliographystyle{IEEEtran}

\end{document}